\documentclass[11pt,conference,compsoc,onecolumn,romanappendices]{IEEEtran}
\usepackage[cmex10]{amsmath}
\usepackage{amssymb,enumerate,bbm}
\usepackage{float}
\usepackage{cite}
\usepackage{setspace}
\singlespacing

\usepackage{fancyhdr}
\fancypagestyle{plain}{%
  \fancyhf{}
  \fancyfoot[C]{\iffloatpage{}{\thepage}}
  }
\pagestyle{plain}

\usepackage{amsthm}
\theoremstyle{definition}

%%%%%%%%%% Start TeXmacs macros
\newcommand{\nin}{\not\in}
\newcommand{\nobracket}{}
\newcommand{\nocomma}{}
\newcommand{\noplus}{}
\newcommand{\tmem}[1]{{\em #1\/}}
\newcommand{\tmname}[1]{\textsc{#1}}
\newcommand{\tmop}[1]{\ensuremath{\operatorname{#1}}}
\newcommand{\tmsamp}[1]{\textsf{#1}}
\newcommand{\tmstrong}[1]{\textbf{\textit{#1}}}

\newenvironment{enumeratealpha}{\begin{enumerate}[a{\textup{)}}] }{\end{enumerate}}
\newenvironment{enumeratenumeric}{\begin{enumerate}[1.] }{\end{enumerate}}
\newenvironment{enumerateromancap}{\begin{enumerate}[I.] }{\end{enumerate}}
\newtheorem{corollary}{\textbf{Corollary}}
\newtheorem{definition}{\textbf{\textit{Definition}}}
\newtheorem{lemma}{\noindent\textbf{Lemma}}
\newtheorem{theorem}{\textbf{Theorem}}

%%%%%%%%%% End TeXmacs macros
\hyphenation{op-tical net-works semi-conduc-tor}
\floatstyle{boxed}
\newfloat{program}{htp}{lop}
\floatname{program}{Program}
\newcommand{\code}[2]{  
  \begin{program}
  #1
  \vspace*{1mm}
  \hrule
  #2
  \end{program}
}
\setlength{\IEEEiedtopsep}{0.0mm}
\IEEEoverridecommandlockouts

\title{\LARGE Asynchronous Byzantine Agreement with Optimal Resilience and Linear Complexity}

\author{
  \IEEEauthorblockN{Cheng Wang}
  \IEEEauthorblockA{
    EPFL \\
    cheng.wang@epfl.ch
  }
}

\begin{document}

\pagenumbering{gobble}

\maketitle

\begin{abstract}
\emph{
Given a system with $n > 3t + 1$ processes, where $t$ is the tolerated number of faulty ones, we present a fast asynchronous Byzantine agreement protocol that can reach agreement in $O(t)$ expected running time.
This improves the $O(n^2)$ expected running time of Abraham, Dolev, and Halpern {\cite{abraham2008almost}}.
Furthermore, if $n = (3 + \varepsilon) t$ for any $\varepsilon > 0$, our protocol can reach agreement in $O (1 / \varepsilon)$ expected running time.
This improves the result of Feldman and Micali {\cite{feldman1988optimal}} (with constant expected running time when $n > 4 t$).
}
\end{abstract}

\newpage
\pagenumbering{arabic}\setcounter{page}{1}

\section{Introduction}

The {\tmem{Byzantine Agreement}} (BA) problem,
first introduced by Pease, Shostak, and Lamport {\cite{pease1980reaching,lamport1982byzantine}}, is a fundamental problem in distributed computing.
Given $n$ processes, $ t $ of which being faulty, the problem 
consists for all correct processes to agree on one of the input values.
The faulty processes might deviate 
from the algorithm assigned to them arbitrarily, e.g., 
to prevent correct processes from agreeing on one of their input values.

A lot of work has been devoted to the problem in the last three decades. Despite 
the effort,
the {\tmem{Asynchronous Byzantine Agreement}} (ABA)
problem, where the communication between processes can take an arbitrary  
amount of time, is still not very well understood.
Certain results are however known. For example, it is known that the problem is 
impossible to solve if $n \leqslant 3 t$
{\cite{pease1980reaching,karlin1986probabilistic}}. 
Any ABA protocol  assuming $n > 3 t$ is called {\tmem{optimally resilient}}. 
According to the seminal result of
{\cite{fischer1985impossibility}}, any deterministic ABA protocol
must have some non-terminating execution. 

Faced with the impossibility result {\cite{fischer1985impossibility}}, a natural direction of research is to design efficient {\tmem{randomized}} Byzantine agreement protocol.
This direction was started with the work of Ben-or {\cite{ben1983another}}, Rabin
{\cite{rabin1983randomized}}, and Bracha {\cite{bracha1984asynchronous}}.
Remarkably, Canetti and Rabin {\cite{canetti1993fast}} proposed an ABA protocol with constant expected running time and overwhelming probability to terminate.
With a randomized ABA protocol the best that can be achieved is to have every execution terminate with 
probability one. Such protocols are said to be {\tmem{almost-surely
terminating}} {\cite{abraham2008almost}}. 

Several almost-surely terminating ABA protocols were 
proposed.  In 1983, Ben-Or
{\cite{ben1983another}} proposed an almost-surely terminating ABA protocol
for $n > 5 t$, which runs in exponential expected time.  One year later, 
Bracha {\cite{bracha1984asynchronous}} 
presented an almost-surely terminating ABA, which also runs in exponential expected time, but with 
optimal resilience, i.e., for $n > 3 t$. In 1988, Feldman and Micali {\cite{feldman1988optimal}}  
presented an almost-surely
terminating ABA protocol with constant expected time, assuming however $n > 4 t$. 
Twenty years later, Abraham, Dolev, and Halpern {\cite{abraham2008almost}} presented an 
almost-surely terminating optimally resilient ABA protocol with  
polynomial efficiency (the expected running time is $O (n^2)$). 
In some sense, state-of-the-art results for almost-surely
terminating ABA  are {\cite{feldman1988optimal}} and
{\cite{abraham2008almost}}:  optimally resilience with  
polynomial efficiency on the one hand, 
or constant expected time, assuming however $n > 4 t$, on the other hand. 

We present in this paper a new almost-surely terminating ABA protocol that achieves a significant progress 
with respect to the state-of-the-art.
For $n > 3 t$, our protocol completes in $O (t)$ expected running time. If $n > (3 +
\varepsilon) t$ where $\varepsilon$ is an arbitrary positive constant, our
protocol has $O (1 / \varepsilon)$ expected running time. 
Table \ref{table-results} aggregates these results in the context of related work. 

\begin{table}[b]
  \centering
  \normalsize
  \begin{tabular}{|c|c|c|}
    \hline
    Reference &  Resilience  & Expected Running Time\\
    \hline
    Ben-Or (1983) {\cite{ben1983another}} & $n > 5 t$ & $O (2^n)$\\
    \hline
    Bracha (1984) {\cite{bracha1984asynchronous}} & $n > 3 t$ & $O (2^n)$\\
    \hline
    Feldman, Micali (1988) {\cite{feldman1988optimal}} & $n > 4 t$ & $O (1)$\\
    \hline
    Abraham, Dolev, Halpern (2008) {\cite{abraham2008almost}} & $n > 3 t$ & $O
    (n^2)$\\
    \hline
    This paper & $n > 3 t$ & $O (t)$\\
    \hline
    This paper & $n > (3 + \varepsilon) t, (\varepsilon > 0)$ & $O (1 / \varepsilon)$\\
    \hline
  \end{tabular}
  \medskip

  \caption{\label{table-results}Results for almost-surely terminating ABA problem}
\end{table}

Most ABA  protocols follow the idea of Ben-or {\cite{ben1983another}}, Rabin
{\cite{rabin1983randomized}}, and Bracha {\cite{bracha1984asynchronous}}, namely a reduction of the 
ABA problem to the implementation of a {\tmem{common coin}}
(namely, a source of common randomness with certain properties). Specially,
the reduction of Bracha {\cite{bracha1984asynchronous}} is optimally resilient
and runs in constant expected time. Thus, designing efficient ABA protocols
could be solved by designing efficient common coins. The protocol of Feldman
and Micali {\cite{feldman1988optimal}} includes a method to implement a common
coin by making use of a {\tmem{verifiable secret sharing}} (VSS) scheme. (For a complete
description of the reduction from VSS to ABA, see 
{\cite{canetti1996studies}}.) Canetti and Rabin {\cite{canetti1993fast}}
have an implementation of {\tmem{asynchronous verifiable secret sharing}}
(AVSS) with constant expected running time but overwhelming probability
to terminate (the resulting ABA protocol is thus not almost-surely
terminating). Recently, King and Saia {\cite{king2014faster}} introduced a novel technique
for implementing common coin via a spectral method.

This paper follows the reduction from (some form of) AVSS to ABA.
We first recall the standard AVSS scheme {\cite{canetti1993fast}}.
Roughly speaking, an AVSS scheme consists of a {\tmem{sharing}} phase and a
{\tmem{reconstruction}} phase, involving a process designated as the
{\tmem{dealer}} which has a value (usually called {\tmem{secret}}) to share. In
the sharing phase, the dealer shares its secret among all processes and each
process locally verifies that a unique secret is being considered. In the
reconstruction phase, the processes reconstruct the secret from the shares.  The
correctness of AVSS lies on two properties:  (1) if the dealer is correct, then
all correct processes will reconstruct the secret of the dealer, and  (2) if
the dealer is faulty, then all correct processes will reconstruct the same
value that is fixed in the sharing phase.

We introduce in this paper a variant of AVSS called IVSS (standing for
{\tmem{inferable (asynchronous) verifiable secret sharing}}).
Our IVSS scheme has a weaker correctness property than AVSS, but provides strong
fault-detection ability. Specifically,
IVSS requires that if the correctness property of AVSS does not hold in an
invocation of some round, then correct processes will ignore (or infer) at
least $t (n - 3 t)$ faulty pairs from that round on. Here, by a \tmem{faulty pair},
we mean a pair of processes of which at least one is faulty.
In our IVSS protocol, secrets are shared through symmetric bivariate polynomials.
If processes reconstruct different secrets in the protocol, the symmetry of polynomials can be used to infer faulty pairs.

There are existing secret sharing protocols with fault-detection capacity, e.g., 
shunning verifiable secret sharing in \cite{beerliova2006efficient} and secret sharing with dispute control in \cite{abraham2008almost}.
These protocols are composed of several levels of secret sharing subprotocols, while our protocol is very simple with only one-level secret sharing subprotocol.
In all previous approaches, the Byzantine agreement algorithm proceeds
round by round and, once a round is over, the correct processes forget it and
never look back to it. In fact, if a correct process could look back at
the history of invocations of the secret sharing protocol, it may infer more
failures.
We implement this history-based checking mechanism in a {\tmem{certification}} subprotocol.
This subprotocol is invoked when the Byzantine agreement protocol is initialized
and then runs concurrently with all invocations of our IVSS protocol.
The main technique for inferring faults in our protocol is also different from \cite{beerliova2006efficient,abraham2008almost}.
Our fault-detection mechanism is based on symmetric polynomials which enable our protocol to infer a linear number of faults when secret sharing does not succeed, while protocols in \cite{beerliova2006efficient,abraham2008almost} can generally infer only one fault.

The rest of this paper is organized as follows. In Section \ref{sec-model}, we
recall the asynchronous computing model and the Byzantine agreement problem.
In Section \ref{sec-ivss}, we state the properties of our IVSS scheme and describe 
an algorithm that implements it.  In Section \ref{sec-aba}, we show how to obtain our 
fast ABA protocol from our IVSS scheme. For space limitations, some algorithms
and proofs are given in the appendices.

\section{Model and Definitions}\label{sec-model}

\subsubsection*{The Model}

We consider an  asynchronous computing model in the classical sense, e.g., 
{\cite{canetti1993fast,abraham2008almost}}. We consider a complete network of $n$
processes with identifiers $\{ 1 \nocomma, 2, \ldots, n \}$.  The number $n$ is always
strictly greater than $3 t$. The communication channels are
private, i.e. no one can read or alter messages transmitted along it.
Messages sent on a channel may have arbitrary (but finite) delay.
A $t$-adversary can control at most $t$ processes during the Byzantine agreement protocol. 
Once a process is controlled, it hands all its data over
to the adversary and follows its instructions. We
call all these controlled processes as {\tmem{faulty}} ones and other uncontrolled
processes as {\tmem{correct}} ones.
Note that the adversary cannot access messages transmitted between correct processes due to private communication channels.

We measure the running time of a protocol by the maximal expected number of communication rounds it takes to reach agreement {\cite{canetti1993fast, king2014faster}}.
%(Variants of such notions are discussed in {\cite{lynch1996distributed}}).
Consider a virtual `global clock'
measuring time in the network. This clock cannot be accessed by the processes.
Let the {\tmem{delay}} of a message transmission denote the time elapsed from its sending to its
reception. The {\tmem{period}} of a finite execution of a protocol is the
longest delay of a message transmission during this execution.
Let the {\tmem{duration}} of a finite
execution denote the total time measured by the global clock divided by the period
of this execution. The {\tmem{expected running time}} of a protocol,
is the maximum over all inputs and applicable adversaries, of the average of the
duration of executions of the protocol over the random inputs of the processes.
In addition, each process divides its local time into {\tmem{rounds}} and execute a protocol
round by round.
The time of each round is less than or equal to a period of the execution of a protocol.
The expected running time of a protocol can be computed by the expected rounds in execution.

\subsubsection*{Asynchronous Byzantine Agreement}

\begin{definition}[ABA]
  Let $\pi$ be any asynchronous protocol in which each process has a binary
  input. We say that $\pi$ is an almost-surely terminating, $t$-resilient ABA
  protocol if the following properties hold for every $t$-adversary and every
  input:
  \begin{itemize}
    \item {\tmstrong{Termination}}: With probability one, every correct
    process terminates and outputs a value.
    
    \item {\tmstrong{Correctness}}: All correct processes which have
    terminated have the same outputs. Moreover, if all correct processes
    have the same input, denoted $v$, then all correct processes output
    $v$.
  \end{itemize}
\end{definition}

\subsubsection*{Asynchronous Broadcast: A-Cast}

{\noindent} We will often make use of this asynchronous broadcast primitive,
introduced by Bracha {\cite{bracha1984asynchronous}} (for $n > 3 t$).  We
follow the terminology in {\cite{canetti1996studies}}. For completeness, the
implementation is provided in Appendix \ref{app-acast}.

\begin{definition}[A-Cast]
  Let $\pi$ be any asynchronous protocol initiated by a designated process (the
  sender) which has an input value $u$ to be broadcast. We say that $\pi$ is a
  $t$-resilient A-Cast protocol if the following properties hold for every
  $t$-adversary:
  \begin{itemize}
    \item {\tmstrong{Termination}}:
    \begin{enumeratenumeric}
      \item If the sender is correct and all  correct processes
      participate in $\pi$, then every correct process eventually
      completes $\pi$.
      \item If some correct process completes $\pi$, then every
      correct process eventually completes $\pi$.
    \end{enumeratenumeric}
    \item {\tmstrong{Correctness}}:
    \begin{enumeratenumeric}
      \item All correct processes which complete $\pi$ receive the same
      value $v$.
      
      \item If the sender is correct, then $v = u$.
    \end{enumeratenumeric}
  \end{itemize}
\end{definition}

\section{Inferable Verifiable Secret Sharing}\label{sec-ivss}

In this section, we first state the properties of our IVSS scheme. Then we provide
an implementation of IVSS. We prove that our implementation satisfies all the
IVSS properties and finally we analyze its fault-detection.

\subsection{Definition}

\begin{definition}[Faulty Pair]
  An unordered pair $\{ i, j \}$ of processes is called a {\tmem{faulty pair}} if either $i$ or $j$ is faulty.
\end{definition}

Our IVSS protocol consists of two subprotocols: $\mathcal{S}$ ({\tmem{sharing}}
protocol) and $\mathcal{R}$ ({\tmem{reconstruction}} protocol). These two
 are invoked separately but $\mathcal{R}$ is never called unless
$\mathcal{S}$ is completed, and $\mathcal{R}$ may not be called even if
$\mathcal{S}$ is completed. If
the correct processes do not reconstruct a same secret in $\mathcal{R}$, then a set of faulty
pairs will be inferred. We assume that each
IVSS invocation is unique for every correct process. This can be easily
guaranteed, e.g. by associating with each IVSS invocation the identifier of
the dealer and an invocation counter.% {\cite{abraham2008almost}}.

\begin{definition}[IVSS]
  Let $(\mathcal{S}, \mathcal{R})$ be any pair of sharing-reconstruction protocol
  with a dealer which has a secret $s$ to share. We say that $(\mathcal{S},
  \mathcal{R})$ is an {\tmname{$\tmop{IVSS}$}} protocol if the following
  properties (called \tmem{IVSS properties}) hold.
  \begin{itemize}
    \item {\tmstrong{Termination}}:
    \begin{enumeratenumeric}
      \item If the dealer is correct and all correct processes keep
      participating in protocol $\mathcal{S}$, then every correct process eventually
      completes protocol $\mathcal{S}$.
      
      \item If some correct process completes protocol $\mathcal{S}$, then
      every correct process that keeps participating in protocol $\mathcal{S}$
      eventually completes protocol $\mathcal{S}$.
      
      \item If some correct process completes protocol $\mathcal{S}$ and all
      correct processes begin protocol $\mathcal{R}$ and keep participating in
      protocol $\mathcal{R}$, then every correct process eventually completes
      protocol $\mathcal{R}$.
      
      \item If some correct process completes protocol $\mathcal{R}$, then
      every correct process that keeps participating in protocol $\mathcal{R}$
      eventually completes protocol $\mathcal{R}$.
    \end{enumeratenumeric}
    \item {\tmstrong{Correctness}}: Once a correct process has completed
    protocol $\mathcal{S}$, then there is a unique value $v$ such that the
    following holds.
    \begin{enumeratenumeric}
      \item Either every correct process upon completing protocol $\mathcal{R}$
      outputs $v$, or a set of new faulty pairs is eventually  inferred by
      correct processes. (In our implementation, the size of the set of new faulty pairs is at least $t (n - 3t)$.)
      
      \item If the dealer is correct, then $v = s$.
    \end{enumeratenumeric}
    \item {\tmstrong{Secrecy}}: If the dealer is correct and no correct
    process invokes protocol $\mathcal{R}$, then the faulty processes have no
    information about secret $s$.
  \end{itemize}
\end{definition}

Note that, a correct process is said to keep participating in a protocol if it
follows the protocol until completion. Another note is that we assume all
secrets, random values, and polynomials to be over the integer ring.

\subsection{Implementation}

In our ABA protocol, the processes invoke a set of secret sharing instances in
each round (starting from round $1$). Every process records its invocations in
each round $r$ and A-Casts these invocations in the next round $r + 1$ to let other processes
know about its behavior in round $r$. We introduce a new component, which we call the
{\tmem{certification protocol}}, to take care of the IVSS invocations from past
rounds and infer faulty pairs. The certification protocol is invoked before
round $1$ and runs concurrently with all invocations of IVSS. Hence
our IVSS protocol should be aware of the particular round it is involved in, and
should make progress based on the data from past rounds. Therefore, we use the
notion IVSS[$r$] with round number $r$ as a parameter. In this section,
we give a high-level description of our IVSS[$r$] and our certification protocols.

\begin{figure}[thp]
\begin{center}
\setlength\doublerulesep{1.5mm} 
\begin{tabular}{|p{0.97\textwidth}|}
\hline
  {{\tmem{{\tmstrong{Sharing protocol}}}} IVSS[$r$]-$\mathcal{S}$:} \\
\hline
  \vspace{-3mm}
  \begin{enumeratenumeric}
    \item If the dealer wants to share secret $s$ in round $r$, it selects a
    random degree-$t$ symmetric bivariate polynomial $f (x, y)$ such that $f
    (0, 0) = s$. Let $f_i$ denote the degree-$t$ polynomial such that $f_i
    (y) = f (i, y)$ for $y \in \{ 1, \ldots, n \}$. The dealer sends
    $f_i$ to process $i$.
    
    \item If process $k$ receives $\widehat{f_k}$ from the dealer, then $k$
    sends $\widehat{f_k} (i)$ to process $i$. (Note that $\widehat{f_k}$ is
    supposed to be $f_k$ if the dealer is correct.)
    
    \item If process $k$ receives $\widehat{f_k}$ from the dealer and receives
    $\widehat{f_i (k)}$ from process $i$, and $\widehat{f_k} (i) =
    \widehat{f_i (k)}$, then $k$ A-Casts ``{\tmsamp{equal: }}$(k, i)$''. (Note
    that $\widehat{f_i (k)}$ is supposed to be $\widehat{f_i} (k)$ if $i$ is
    correct.)
    
    \item \label{mr-ivss-s-decideM}If there is a set $\mathcal{M}$ of $n - t$
    processes such that the following conditions are satisfied for the dealer:
    \begin{enumeratealpha}
      \item for every $i, j \in \mathcal{M}$, the dealer receives
      ``{\tmsamp{equal: }}$(i, j)$'';
      
      \item for every $i, j, p, q \in \mathcal{M}$, the dealer receives
      ``{\tmsamp{checked}}$_r \nocomma : p, q, \{ i, j \}$'' from $p$,
    \end{enumeratealpha}
    then the dealer A-Casts $\mathcal{M}$. ($\mathcal{M}$ is called
    {\tmem{candidate set}}.)
    
    \item \label{mr-ivss-s-complete}If process $k$ receives $\mathcal{M}$ from
    the dealer and the following conditions are satisfied:
    \begin{enumeratealpha}
      \item for every $i, j \in \mathcal{M}$, $k$ receives ``{\tmsamp{equal:
      }}$(i, j)$'';
      
      \item for every $i, j, p, q \in \mathcal{M}$, $k$ receives
      ``{\tmsamp{checked}}$_r \nocomma : p, q, \{ i, j \}$'' from $p$,
    \end{enumeratealpha}
    then $k$ completes the sharing protocol.
  \end{enumeratenumeric} \vspace{-5mm}\\
\hline
\hline
  {{\tmem{{\tmstrong{Reconstruction protocol}}}} IVSS[$r$]-$\mathcal{R}$:} \\
\hline
  \vspace{-3mm}
  \begin{enumeratenumeric}
    \item \label{ivss-r-rbf}If process $k \in \mathcal{M}$, then $k$ A-Casts
    polynomial $\widehat{f_k}$.
    
    \item \label{mr-ivss-r-ready}If there is a set $I S_k$ (standing for
    Interpolation Set) of $n - 2 t$ processes such that
    \begin{enumeratealpha}
      \item $k$ receives $\widetilde{f_i}$ from each process $i \in I S_k$;
      (Note that $\widetilde{f_i}$ is supposed to be $\widehat{f_i}$ if $i$ is
      correct.)
      
      \item there is a symmetric bivariate degree-$t$ polynomial $\bar{f}$
      such that $\bar{f} (i, j) = \widetilde{f_i} (j)$ for all $i \in I S_k$
      and $j \in \mathcal{M}$,
    \end{enumeratealpha}
    then $k$ sets $v = \bar{f} (0, 0)$, A-Casts ``{\tmsamp{ready to
    complete}}'' and adds this instance of IVSS[$r$] to
    $\tmop{CoreInvocations}^k_r$.
    
    \item \label{mr-ivss-r-complete}If $k$ completes Step
    \ref{mr-ivss-r-ready} and receives ``{\tmsamp{ready to complete}}'' from
    $n - t$ processes, then $k$ outputs $v$ and completes the reconstruction
    protocol.
  \end{enumeratenumeric} \vspace{-5mm}\\
\hline
\hline
  {{\tmem{{\tmstrong{Certification protocol}}}}:} \\
\hline
  \vspace{-3mm}
  \begin{enumeratenumeric}
    \item Process $k$ initializes empty sets $F P_k$ and $\tmop{CoreInvocations}^k_0$.
    % RG: This is weird because we didn't talk yet about our ABA; we are talking about our IVSS
    
    \item Process $k$ sets $\tmop{CoreInvocations}_r^k = \varnothing$ and
    A-Casts $\tmop{CoreInvocations}_{r - 1}^k$ in the beginning of round $r$ ($r \geqslant 1$).

    \item \label{mr-ivss-infer}({\tmem{Infer faulty pairs}}) If $k$ receives
    $\tmop{CoreInvocations}^l_r$ from process $l$, then for any instance
    $\mathbbm{I}$ in $\tmop{CoreInvocations}^l_r$, if $k$ receives
    $\widetilde{f_i}$ and $\widetilde{f_j}$ from process $i$ and $j$ ($i, j
    \in \mathcal{M}$ of $\mathbbm{I}$) in Step \ref{ivss-r-rbf} of
    IVSS-$\mathcal{R}$ such that $\widetilde{f_i} (j) \neq \widetilde{f_j} (i)$,
    then $k$ adds unordered pair $\{ i, j \}$ to $F P_k$.
    
    \item \label{coreInvo-finishLine1}If $k$ receives
    $\tmop{CoreInvocations}^l_r$ from process $l$, then for any invocation
    $\mathbbm{I}$ in $\tmop{CoreInvocations}^l_r$, $k$ completes the sharing
    protocol of $\mathbbm{I}$ and Step \ref{ivss-r-rbf} of IVSS[$r$]-$\mathcal{R}$ of $\mathbbm{I}$.
    (Note that $k$ does this because different process might complete different
    instances of IVSS[$r$] in round $r$.)

    \item \label{history-checkPair}If the following conditions are satisfied
    for process $k$ (check in order a, b, c):
    \begin{enumeratealpha}
      \item $k$ receives $\tmop{CoreInvocations}^l_{r'}$ from process $l$ for
      all $r' < r$;
      
      \item for every IVSS invocation $\mathbbm{I}$ in $\underset{\nocomma r'
      < r}{\cup} \tmop{CoreInvocations}^l_{r'}$, if $i$ ($j$ resp.) is
      included in the candidate set $\mathcal{M}$ of $\mathbbm{I}$ then $k$
      should receive the polynomial A-Cast by $i$ ($j$ resp.) in Step
      \ref{ivss-r-rbf} of IVSS-$\mathcal{R}$ of $\mathbbm{I}$;
      
      \item $\{ i, j \} \nin F P_k$ (Here $F P_k$ has been updated after
      checking condition b, see Step \ref{mr-ivss-infer}),% of
      %IVSS-$\mathcal{R}$),
    \end{enumeratealpha}
    then $k$ A-Casts ``{\tmsamp{checked}}$_r \nocomma ; k, l, \{ i, j \}$''.
    (Intuitively, this means $k$ has checked that $\{ i, j \}$ is not a faulty
    pair according to the invocation history of $l$ before round $r$.)
  \end{enumeratenumeric} \vspace{-5mm}\\
\hline
\end{tabular}
\end{center}
\end{figure}

In the sharing phase, we assume that the dealer with secret $s$ selects a random
{\tmem{degree-$t$ symmetric bivariate polynomial}} $f$ such that $f (0, 0) =
s$. Let $f_i$ denote the degree-$t$ polynomial such that $f_i (y) = f (i,
y)$ for $y \in \{ 1, \ldots, n \}$. The dealer shares secret $s$ by
sending polynomial $f_i$ to process $i$. By polynomial interpolation, if
the dealer is correct, then any $t + 1$ correct processes could reconstruct
$f$. Since $f$ is a symmetric polynomial, we should have $f_i (j) = f_j (i)$.
Each process $k$ that receives $f_k$ sends $f_k (i)$ to process $i$. When $k$
receives $f_i (k)$ from process $i$, $k$ checks whether $f_i (k) = f_k
(i)$. This equality may not be true since the dealer or process $i$ could be
faulty. If the equality is correct, then $k$ A-Casts ``{\tmsamp{equal: }}$(k,
i)$''. When the dealer receives ``{\tmsamp{equal: }}{$i$}''
from every process $i$ in a set $\mathcal{M}$ that contains $n - t$ processes, and checks that
$\mathcal{M}$ does not contains faulty pairs according to the IVSS invocations
in the past rounds (see the description of the certification protocol below), the dealer
A-Casts $\mathcal{M}$. Intuitively, $\mathcal{M}$ is a candidate set that
processes could trust to reconstruct the secret. If process $k$ receives
set $\mathcal{M}$ from the dealer and checks the correctness of
$\mathcal{M}$ as the dealer, then $k$ completes the sharing protocol.

In the reconstruction phase, processes in $\mathcal{M}$ A-Cast their polynomials
received from the dealer. When process $k$ receives polynomials from $n-2t$ processes
and these polynomials can be interpolated to a degree-$t$ symmetric
bivariate polynomial $\bar{f}$, $k$ considers $\bar{f} (0, 0)$ as
the dealer's secret. If $\bar{f}$ is not equal to the polynomial $f$ selected
by the dealer in the sharing phase, we can show that a set of faulty
pairs will be inferred. In order to get every secret sharing instance checked by
the certification protocol in the next round, it is important that, when
a correct process completes a secret sharing invocation, at least $t + 1$
correct processes take this invocation as its history invocation. 
Therefore, after getting polynomial $\bar{f}$, $k$ first A-Casts
a message ``{\tmsamp{ready to complete}}'' and records the invocation. Then $k$
completes the reconstruction phase if $k$ receives $n - t$ ``{\tmsamp{ready to
complete}}''.

Our certification protocol handles the  history of invocations.  Process
$k$ uses set $F P_k$ to track the faulty pairs it inferred. In each round
$r$, $k$ records all invocations of IVSS[$r$] and adds them into a set
called $\tmop{CoreInvocations}^k_r$. Then, at the beginning of round $r + 1$,
$k$ will A-Cast $\tmop{CoreInvoations}^k_r$ to let other processes know its
action in round $r$. Intuitively, this means that every correct process should know
what the other processes have done in the past rounds.  If a process $k$
receives $f_i$ from $i$ and $f_j$ from $j$ but $f_i (j) \neq f_j (i)$ for some
IVSS instance, then $k$ knows that at least one of $i$, $j$ is faulty and adds
unordered pair $\{ i, j \}$ into $F P_k$. The word ``inferable'' in IVSS means
that correct pairs could infer faulty pairs during the execution.  If $k$
receives $\tmop{CoreInvoations}^l_r$ from process $l$, then it checks for each
invocation $\mathbbm{I}$ in $\tmop{CoreInvoations}^l_r$ that every correct
process in $\mathcal{M}$ of $\mathbbm{I}$ should A-Cast its polynomial in the
beginning of the reconstruction phase, and no pair of correct processes should
be considered as a faulty pair
according to these invocations. If $k$ has checked that an
unordered pair $\{ i, j \}$ is not a faulty pair according to the
invocation history of $l$ before round $r$, then $k$ will A-Cast
``{\tmsamp{checked}}$_r \nocomma : k, l, \{ i, j \}$''. In the sharing protocol,
a correct process accepts a candidate set $\mathcal{M}$ only if every pair
of processes in $\mathcal{M}$ are checked by every process in $\mathcal{M}$.

\subsection{Proof of IVSS properties}

\begin{lemma}
  \label{lemma-fp-correct}If $i, j, k$ are correct processes,
  then unordered pair $\{ i, j \}$ will not be added to $F P_k$.
\end{lemma}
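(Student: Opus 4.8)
The plan is to trace through the protocol, isolate the single rule by which any pair enters $FP_k$, and then show its triggering condition is unsatisfiable when $i,j,k$ are all correct. Inspecting the certification protocol, the only place a pair is added to $FP_k$ is Step~\ref{mr-ivss-infer}: process $k$ adds $\{i,j\}$ precisely when, for some instance $\mathbbm{I}$ appearing in a received $\tmop{CoreInvocations}^l_r$ with $i,j \in \mathcal{M}$ of $\mathbbm{I}$, the polynomials $\widetilde{f_i},\widetilde{f_j}$ that $k$ obtained from the Step~\ref{ivss-r-rbf} A-Casts of $i$ and $j$ satisfy $\widetilde{f_i}(j) \neq \widetilde{f_j}(i)$. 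So it suffices to show this inequality never occurs for correct $i,j,k$.

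First I would reduce the reconstructed polynomials to the genuine ones. Since $i$ is correct and $i \in \mathcal{M}$, it A-Casts in Step~\ref{ivss-r-rbf} exactly the polynomial $\widehat{f_i}$ it received from the dealer; by the A-Cast Correctness property (the received value equals the sender's input when the sender is correct), the value $k$ records as $\widetilde{f_i}$ equals $\widehat{f_i}$, and likewise $\widetilde{f_j} = \widehat{f_j}$. Thus the triggering condition becomes $\widehat{f_i}(j) \neq \widehat{f_j}(i)$, and it remains to prove the equality $\widehat{f_i}(j) = \widehat{f_j}(i)$.

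The core of the argument is that membership of two correct processes in the same candidate set forces their cross-evaluations to agree. For $k$ to treat $i,j$ as members of $\mathcal{M}$ of $\mathbbm{I}$, it must hold the dealer's A-Cast candidate set and have completed the sharing protocol of $\mathbbm{I}$ (Step~\ref{coreInvo-finishLine1}); by Step~\ref{mr-ivss-s-complete} this entails that $k$ received ``equal: $(i,j)$''. Because the sender of this A-Cast is the correct process $i$, the A-Cast Correctness property again guarantees that $i$ genuinely A-Cast ``equal: $(i,j)$'', which by the sharing protocol means $i$ verified $\widehat{f_i}(j) = \widehat{f_j(i)}$, where $\widehat{f_j(i)}$ is the value $i$ received from $j$. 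As $j$ is correct it sends $i$ exactly $\widehat{f_j}(i)$, and since the channel between the correct processes $i,j$ is reliable and private this is precisely what $i$ receives; hence $\widehat{f_j(i)} = \widehat{f_j}(i)$ and therefore $\widehat{f_i}(j) = \widehat{f_j}(i)$, contradicting the triggering inequality.

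I expect the main obstacle to be this third step, namely ruling out that a faulty dealer places two correct processes holding mutually inconsistent shares into $\mathcal{M}$. The subtlety is that the dealer's A-Cast of $\mathcal{M}$ is not by itself trustworthy; the protection comes entirely from the fact that a correct $k$ can only act on $\mathbbm{I}$ after completing its sharing protocol, together with the unforgeability of A-Cast for correct senders, which prevents the required ``equal: $(i,j)$'' message from existing unless the correct process $i$ actually observed the equality. Making this dependency explicit --- that ``$\mathcal{M}$ of $\mathbbm{I}$'' in Step~\ref{mr-ivss-infer} is the verified candidate set and not merely a value claimed by a faulty $l$ or a faulty dealer --- is the delicate point of the proof.
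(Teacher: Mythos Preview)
Your proposal is correct and follows essentially the same approach as the paper: isolate Step~\ref{mr-ivss-infer} as the unique insertion point into $FP_k$, then argue that membership of $i,j$ in $\mathcal{M}$ forces the ``\textsf{equal}'' A-Casts to have been genuinely issued by the correct senders, which in turn forces $\widehat{f_i}(j)=\widehat{f_j}(i)$. The paper's proof is terser and simply asserts that $i,j\in\mathcal{M}$ implies $i$ and $j$ must have A-Cast ``\textsf{equal: }$(i,j)$'' and ``\textsf{equal: }$(j,i)$''; you supply the missing justification by routing through $k$'s completion of the sharing protocol and the A-Cast correctness guarantees, which is a valid way to fill that gap (an equally clean alternative is to observe that the correct process $i$ itself must have completed $\mathcal{S}$ before A-Casting $\widehat{f_i}$ in Step~\ref{ivss-r-rbf}, and hence $i$ already received ``\textsf{equal: }$(j,i)$'' from the correct $j$).
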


\begin{proof}
  The pair $\{ i, j \}$ will be added to $F P_k$ only if there is an invocation $\mathbbm{I}$
  of IVSS[$r$] such that
  $i, j \in \mathcal{M}$ and the polynomials $\widetilde{f_i}$ and
  $\widetilde{f_j}$ A-Casted by $i$ and $j$ in Step \ref{ivss-r-rbf} of
  IVSS[$r$]-$\mathcal{R}$ satisfy $\widetilde{f_i} (j) \neq
  \widetilde{f_j} (i)$. However, if $i, j \in \mathcal{M}$ then $i$ and $j$
  must have A-Casted ``{\tmsamp{equal: }}$(i, j)$'' and ``{\tmsamp{equal: }}$(j,
  i)$'' and hence must have checked that $\widetilde{f_i} (j) = \widetilde{f_j} (i)$ in
  IVSS[$r$]-$\mathcal{S}$. Thus $\{ i, j \}$ will not be added to $F P_k$.
\end{proof}

\begin{lemma}
  \label{lemma-checked}In round $r$ ($r \geqslant 1$), if $i$, $j$, $k$, and $l$ are correct
  processes, then $k$ eventually A-Cast ``{\tmsamp{checked}}$_r \nocomma
  : k, l, \{ i, j \}$''.
\end{lemma}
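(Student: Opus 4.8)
The plan is to verify that the three guard conditions (a), (b), (c) of Step~\ref{history-checkPair} of the certification protocol are all eventually satisfied when $i$, $j$, $k$, $l$ are correct, at which point $k$ A-Casts the desired message by definition of that step. I expect the argument to proceed by induction on the round number $r$, because discharging condition (b) for round $r$ forces correct processes to complete sharing protocols from strictly earlier rounds $r' < r$, and the completion of those sharing protocols (Step~\ref{mr-ivss-s-complete}) itself depends on ``{\tmsamp{checked}}$_{r'}$'' messages whose existence is exactly what this lemma asserts for the smaller index $r'$.

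For condition (a), I would use that $l$ is correct, so it A-Casts $\tmop{CoreInvocations}^l_{r'}$ at the beginning of round $r'+1$ for every $r' < r$; by the Termination property of A-Cast (correct sender), $k$ eventually receives each of these finitely many sets. Condition (c) is immediate from Lemma~\ref{lemma-fp-correct}: since $i$, $j$, $k$ are all correct, the pair $\{i,j\}$ is never added to $FP_k$, so $\{i,j\} \nin FP_k$ holds at every moment, in particular after the update to $FP_k$ triggered while checking condition (b).

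The bulk of the work is condition (b): for every invocation $\mathbbm{I}$ in the finite union $\cup_{r' < r} \tmop{CoreInvocations}^l_{r'}$ whose candidate set $\mathcal{M}$ contains $i$ (or $j$), I must show $k$ eventually receives the polynomial A-Cast by $i$ (or $j$) in Step~\ref{ivss-r-rbf}. The key observation is that, since $l$ is correct, its having placed $\mathbbm{I}$ into $\tmop{CoreInvocations}^l_{r'}$ means $l$ completed Step~\ref{mr-ivss-r-ready} of $\mathbbm{I}$, and hence completed the sharing protocol of $\mathbbm{I}$. Because $l$ A-Casts this set and the correct process $i$ eventually receives it, Step~\ref{coreInvo-finishLine1} compels $i$ to complete the sharing protocol of $\mathbbm{I}$ and to execute Step~\ref{ivss-r-rbf}; as $i \in \mathcal{M}$, process $i$ thereby A-Casts its polynomial, which $k$ eventually receives by A-Cast Termination. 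The identical argument applies to $j$, so condition (b) is met for each of the finitely many such invocations.

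The main obstacle is the hidden premise in the previous paragraph that each correct process can in fact \emph{complete} the sharing protocol of a past invocation $\mathbbm{I}$: completion in Step~\ref{mr-ivss-s-complete} requires receiving ``{\tmsamp{checked}}$_{r'}$'' messages from the members of $\mathcal{M}$, which are precisely the messages this lemma produces. The induction on $r$ resolves this without circularity: for each $\mathbbm{I} \in \tmop{CoreInvocations}^l_{r'}$ with $r' < r$, the inductive hypothesis applied in round $r'$ (to the relevant correct quadruples) guarantees the required ``{\tmsamp{checked}}$_{r'}$'' messages are A-Cast and hence eventually received. The base case $r = 1$ is trivial, since $\tmop{CoreInvocations}^l_0 = \varnothing$ makes condition (b) vacuous and conditions (a) and (c) reduce to receiving an empty A-Cast and to Lemma~\ref{lemma-fp-correct}, respectively.
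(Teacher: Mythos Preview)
Your verification of conditions (a) and (c) and the core of your argument for condition (b) match the paper's proof: from $\mathbb{I}\in\tmop{CoreInvocations}^l_{r'}$ you deduce that the correct process $l$ completed the sharing protocol of $\mathbb{I}$, and then Step~\ref{coreInvo-finishLine1} forces $i$ to A-Cast its polynomial in Step~\ref{ivss-r-rbf}. The paper's proof, however, is direct and uses no induction on $r$.

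The inductive layer you add to handle your ``hidden premise'' is both unnecessary and, as written, insufficient. It is unnecessary because the premise dissolves without induction: since correct $l$ already completed Step~\ref{mr-ivss-s-complete} of $\mathbb{I}$, every message required there (all delivered by A-Cast) was received by $l$, and by A-Cast Termination~(2) will eventually be received by $i$ as well; this is precisely Termination~(2) of IVSS, so $i$ can complete the sharing protocol of $\mathbb{I}$ with no appeal to round~$r'$ instances of the lemma. It is insufficient because Step~\ref{mr-ivss-s-complete}(b) demands ``{\tmsamp{checked}}$_{r'}\!: p,q,\{i',j'\}$'' from \emph{every} $p\in\mathcal{M}$, and $\mathcal{M}$ may contain up to $t$ faulty processes; your inductive hypothesis (this lemma in round $r'$) speaks only about correct quadruples $(i',j',p,q)$ and so cannot produce the messages originating from faulty $p$. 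The paper in fact sidesteps the whole issue: from $l$'s completion it observes directly that $i\in\mathcal{M}$ implies $i$ already received $\widehat{f_i}$ from the dealer (having A-Cast ``{\tmsamp{equal:}} $(i,\cdot)$''), and that together with knowledge of $\mathcal{M}$ is all $i$ needs to execute Step~\ref{ivss-r-rbf}.
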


\begin{proof}
  Since $\{ i, j \}$ is not in $F P_k$ by Lemma
  \ref{lemma-fp-correct}, we only need to check conditions {\tmstrong{a}} and {\tmstrong{b}} of Step
  \ref{history-checkPair} in the certification protocol.
  
  Condition {\tmstrong{a}}: Since $l$ is correct, $l$ will A-Cast
  $\tmop{CoreInvocations}^l_r$ in the beginning of round $r$. Then $k$ will
  receive these $\tmop{CoreInvocations}^l_r$ by the correctness property of A-Cast.
  
  Condition {\tmstrong{b}}: Suppose that $i$ is in the set
  $\mathcal{M}$ of an IVSS[$r'$] invocation $\mathbbm{I}$ in
  $\underset{\nocomma r' < r}{\cup} \tmop{CoreInvocations}^l_{r'}$. Since $l$
  adds $\mathbbm{I}$ into its CoreInvocations, $l$ must have completed the
  sharing protocol of $\mathbbm{I}$. Then $i$ must have received polynomial
  $\widehat{f_i}$ from the dealer in invocation $\mathbbm{I}$. According
  to Step \ref{coreInvo-finishLine1} of the certification protocol, $i$ will complete
  Step \ref{ivss-r-rbf} of IVSS[$r'$]-$\mathcal{R}$ of $\mathbbm{I}$. So $k$
  will receive the polynomial A-Casted by $i$ in Step \ref{ivss-r-rbf} of
  IVSS[$r$]-$\mathcal{R}$ of $\mathbbm{I}$.
  
  Taking above together, $k$ will A-Cast ``{\tmsamp{checked}}$_r \nocomma : k,
  l, \{ i, j \}$''.
\end{proof}

\begin{lemma}
  \label{lemma-poly}Let $N$ be a subset of $\{ 1, \ldots, n \}$ and $| N |
  \geqslant t + 1$. Let $\{ f_i \}_{i \in N}$ be a set of degree-t univariate
  polynomials. If $f_i (j) = f_j (i)$ for all $i, j \in N$, then there is a
  unique symmetric bivariate degree-$t$ polynomial $f$ such that $f (i, j) =
  f_i (j)$ for all $i, j \in N$.
\end{lemma}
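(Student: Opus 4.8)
The plan is to construct $f$ by one-dimensional Lagrange interpolation over a $(t+1)$-element subset of $N$, and then to exploit the hypothesis $f_i(j)=f_j(i)$ to show that the resulting polynomial is symmetric and in fact reproduces every $f_i$. The recurring tool will be the elementary fact that a bivariate polynomial of degree at most $t$ in each variable that vanishes on a $(t+1)\times(t+1)$ product grid is identically zero: fixing the second variable at each of its $t+1$ grid values kills the (degree-$t$) polynomial in the first variable, and then varying the first variable and repeating in the second variable forces the whole polynomial to vanish.

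First I would fix any $S\subseteq N$ with $|S|=t+1$ and set
$$ f(x,y)=\sum_{i\in S} f_i(y)\,L_i(x), \qquad L_i(x)=\prod_{k\in S\setminus\{i\}}\frac{x-k}{\,i-k\,}. $$
Each Lagrange basis polynomial $L_i$ has degree $t$ and each $f_i$ has degree $t$, so $f$ has degree at most $t$ in $x$ and at most $t$ in $y$, which is the required class once one confirms that ``degree-$t$ symmetric bivariate polynomial'' means degree at most $t$ in each variable. Since $L_i(j)=\delta_{ij}$ for $i,j\in S$, we get $f(i,y)=f_i(y)$ as polynomials in $y$ for every $i\in S$.

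The two substantive steps, which I expect to be the main obstacle, are proving symmetry and upgrading agreement from $S$ to all of $N$. For symmetry, consider $g(x,y)=f(x,y)-f(y,x)$, which has degree at most $t$ in each variable; for $p,q\in S$ the hypothesis gives $f(p,q)=f_p(q)=f_q(p)=f(q,p)$, so $g$ vanishes on the grid $S\times S$, whence $g\equiv 0$ by the tool above and $f$ is symmetric. For the extension, take $i\in N$ arbitrary and $j\in S$; then symmetry and the hypothesis give $f(i,j)=f(j,i)=f_j(i)=f_i(j)$, so the degree-$t$ polynomials $f(i,y)$ and $f_i(y)$ agree at the $t+1$ points of $S$ and therefore coincide, giving $f(i,j)=f_i(j)$ for all $i,j\in N$.

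Finally, uniqueness follows from the same grid fact: if $f$ and $f'$ both satisfy the conclusion, then $h=f-f'$ has degree at most $t$ in each variable and vanishes on $N\times N\supseteq S\times S$, so $h\equiv 0$. The only points demanding care are stating the grid-vanishing fact precisely and checking that the Lagrange construction indeed lands in the class of degree-$t$ (in each variable) polynomials; the rest is routine Lagrange bookkeeping.
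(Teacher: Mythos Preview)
Your proof is correct and follows the same strategy as the paper: Lagrange-interpolate on a $(t+1)$-element subset, establish symmetry, then extend to all of $N$ via the ``two degree-$t$ polynomials agreeing on $t+1$ points coincide'' argument. The only cosmetic difference is that the paper writes the interpolant in tensor-product form $\sum_{i,j\in N_0} f_i(j)\,L_i(x)L_j(y)$, from which symmetry is read off directly from $f_i(j)=f_j(i)$, whereas you obtain symmetry via the grid-vanishing fact; the two constructions coincide once one expands $f_i(y)=\sum_{j\in S} f_i(j)L_j(y)$.
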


\begin{proof}
  Select any subset $N_0$ of $N$ such that $| N_0 | = t + 1$. Let
  \[ f_0 (x, y) = \sum_{i \in N_0 ; j \in N_0} \frac{\underset{k \in N_0, k
     \neq i}{\prod} (x - k) \underset{k \in N_0, k \neq j}{\prod} (y -
     k)}{\underset{k \in N_0, k \neq i}{\prod} (i - k) \underset{k \in N_0, k
     \neq j}{\prod} (j - k)} f_i (j) . \]
  By Lagrange interpolation, $f_0 (i, j) = f_i (j)$ for all $i, j \in N_0$.
  Since $f_i (j) = f_j (i)$, $f_0$ is a symmetric bivariate degree-$t$
  polynomial by definition. Now we prove that $f_0 (i, j) = f_i (j)$ for all
  $i, j \in N$.
  
  Consider any arbitrary $i$ in $N$. We have $f_i (j) = f_j (i) = f_0 (j, i)$
  for all $j \in N_0$. Since $f_0$ is symmetric, we have $f_i (j) = f_0 (i,
  j)$ for all $j \in N_0$. Since $| N_0 | = t + 1$, we have $f_i (y) = f_0 (i,
  y)$ for any $y$. Especially, we have $f_i (j) = f_0 (i, j)$ for all $j \in N$. Hence,
  $f_0$ satisfies $f_0 (i, j) = f (i, j)$. The uniqueness follows easily from
  Lagrange interpolation.
\end{proof}

\begin{theorem}
  Assume $n > 3 t$. Then the pair (IVSS[$r$]-$\mathcal{S}$,
  IVSS[$r$]-$\mathcal{R}$) satisfies all the IVSS properties.
\end{theorem}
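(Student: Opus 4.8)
The plan is to verify the three groups of properties in the IVSS definition one at a time, taking as black boxes the A-Cast primitive together with the two interpolation facts already in hand: Lemma~\ref{lemma-poly} (any $\geqslant t+1$ pairwise-consistent degree-$t$ univariate polynomials extend to a unique symmetric bivariate degree-$t$ polynomial) and Lemma~\ref{lemma-checked} (correct processes eventually issue the \textsf{checked} messages needed to accept an all-correct candidate set). The recurring observation is that every object a correct process blocks on---the \textsf{equal} and \textsf{checked} messages, the candidate set $\mathcal{M}$, the reconstruction polynomials, and the \textsf{ready to complete} messages---is disseminated by A-Cast, so A-Cast's termination and agreement clauses are exactly what drive both liveness and the global consistency of these objects. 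For \textbf{Termination} clause~1, I would take $\mathcal{M}$ to be any $n-t$ correct processes: with a correct dealer every correct pair A-Casts the matching \textsf{equal} messages and Lemma~\ref{lemma-checked} supplies the \textsf{checked} messages, so both conditions of Step~\ref{mr-ivss-s-decideM} hold, the dealer A-Casts $\mathcal{M}$, and the same A-Casts let every correct process pass Step~\ref{mr-ivss-s-complete}. Clauses~2 and~4 are immediate from A-Cast. For clause~3, $\mathcal{M}$ contains at least $n-2t\geqslant t+1$ correct processes whose received polynomials are pairwise consistent (membership forces the \textsf{equal} checks, hence $\widehat{f_i}(j)=\widehat{f_j}(i)$ for correct $i,j$), so Lemma~\ref{lemma-poly} produces a symmetric degree-$t$ $\bar f$ agreeing with each correct row on all of $\mathcal{M}$; taking $IS_k$ to be these correct processes satisfies Step~\ref{mr-ivss-r-ready}, and then every correct process reaches \textsf{ready to complete} and completes Step~\ref{mr-ivss-r-complete}.

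For \textbf{Correctness}, as soon as one correct process finishes $\mathcal{S}$ the set $\mathcal{M}$ is fixed (A-Cast agreement), its $\geqslant t+1$ correct members carry pairwise-consistent received polynomials, and Lemma~\ref{lemma-poly} yields the unique symmetric degree-$t$ polynomial $f^*$; I set $v=f^*(0,0)$, which equals $s$ when the dealer is correct (clause~2), since then $f^*$ is the dealer's own polynomial. For clause~1 I argue the dichotomy. If the polynomials A-Casted in Step~\ref{ivss-r-rbf} by the members of $\mathcal{M}$ are pairwise consistent, Lemma~\ref{lemma-poly} with $|\mathcal{M}|\geqslant t+1$ forces every correct reconstructor's $\bar f$ to coincide with $f^*$, so all output $v$. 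Otherwise, any correct process that completes $\mathcal{R}$ with $\bar f\neq f^*$ must witness a clash: its interpolation set (which lies in $\mathcal{M}$) cannot have all its rows agree with $f^*$ on $\mathcal{M}$, or else $\bar f$ and $f^*$ would agree on a $(t+1)\times(t+1)$ grid and be equal; hence some faulty $i\in IS_k$ has $\widetilde{f_i}\neq f^*(i,\cdot)$, and since both are degree-$t$ they agree on at most $t$ points, leaving at least $n-3t$ correct columns $j\in\mathcal{M}$ with $\widetilde{f_i}(j)\neq f^*(i,j)=\widetilde{f_j}(i)$. Both polynomials are A-Cast and the instance enters $\tmop{CoreInvocations}$ in Step~\ref{mr-ivss-r-ready}, so by Step~\ref{mr-ivss-infer} every correct process adds $\{i,j\}$ to its $FP$; such pairs are genuinely faulty by Lemma~\ref{lemma-fp-correct} and new because $\mathcal{M}$ was accepted only after being certified clash-free. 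The quantitative count---at least $n-3t$ clashing faulty rows, each against at least $n-3t$ correct columns, giving $\geqslant t(n-3t)$ new pairs---I would carry out separately in the fault-detection analysis. Finally, \textbf{Secrecy} follows from the standard sharing argument: with a correct dealer and no reconstruction the adversary sees only the $\leqslant t$ rows $f_i$ of faulty $i$ together with cross-values already determined by them, and for a uniformly random symmetric degree-$t$ $f$ with $f(0,0)=s$ these $\leqslant t$ rows leave the value $f(0,0)$ unconstrained.

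The main obstacle is clause~1 of Correctness: I must convert a purely local disagreement (one reconstructor obtaining $\bar f\neq f^*$) into a \emph{globally} inferred, genuinely new set of faulty pairs. This hinges on combining the symmetry of the bivariate polynomial---so that a deviating faulty row $i$ necessarily produces an observable clash $\widetilde{f_i}(j)\neq\widetilde{f_j}(i)$ against a correct column $j$---with the certification machinery---so that the offending instance is revisited in the next round via $\tmop{CoreInvocations}$ and the clash is seen identically by all correct processes through the A-Casted polynomials. Pinning down the counting bound $t(n-3t)$, which is precisely where the choice of symmetric bivariate polynomials pays off, is the most delicate part and I would isolate it from the present qualitative argument.
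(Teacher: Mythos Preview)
Your proposal is correct and follows essentially the same approach as the paper: verify each IVSS clause using A-Cast's termination/agreement guarantees, Lemma~\ref{lemma-checked} for the \textsf{checked} messages, and Lemma~\ref{lemma-poly} to interpolate the $\geqslant n-2t$ correct rows in $\mathcal{M}$, then defer the quantitative $t(n-3t)$ bound to the separate fault-detection analysis. Your treatment of Correctness~(1) is in fact more fleshed out than the paper's, which establishes only that \emph{some} clash $\widetilde{f_i}(j)\neq\widetilde{f_j}(i)$ arises and immediately defers both the count and the ``new'' qualifier to the subsequent section; your rough count sketch (``$n-3t$ rows against $n-3t$ columns'') does not match the actual argument there, but since you explicitly postpone that computation this is harmless.
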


\begin{proof}
  We check below the IVSS properties.
  
  {\tmstrong{Termination (1)}}: Suppose the dealer is correct and all correct
  processes keep participating in IVSS[$r$]-$\mathcal{S}$. Every
  correct process will receive correct messages from the dealer. Then for
  each pair $(i, j)$ of correct processes, $i$ will A-Cast ``{\tmsamp{equal:
  }}$(i, j)$''. By Lemma \ref{lemma-checked},
  for correct processes $i, j, k, l$, ``{\tmsamp{checked}}$_r \nocomma :
  k, l, \{ i, j \}$'' will be A-Cast by $k$. Thus the set of correct
  processes will satisfy the conditions in Step \ref{mr-ivss-s-decideM} of
  IVSS[$r$]-$\mathcal{S}$. Therefore, a correct dealer will A-Cast a set
  $\mathcal{M}$ with respect to Step \ref{mr-ivss-s-decideM} of
  IVSS[$r$]-$\mathcal{S}$. Since all messages that the dealer received in
  Step \ref{mr-ivss-s-decideM} are sent using A-Cast, it follows that all
  correct processes will receive $\mathcal{M}$ and check that $\mathcal{M}$
  satisfies the conditions in Step \ref{mr-ivss-s-complete} of
  IVSS[$r$]-$\mathcal{S}$. Hence, every correct process will complete
  IVSS[$r$]-$\mathcal{S}$.
  
  {\tmstrong{Termination (2)}}: If a correct process completes
  IVSS[$r$]-$\mathcal{S}$, then, since all messages required in Step \ref{mr-ivss-s-complete}
  of IVSS[$r$]-$\mathcal{S}$ are sent by A-Casting, every correct process that keeps
  participating in IVSS[$r$]-$\mathcal{S}$ will receive these messages and complete
  IVSS[$r$]-$\mathcal{S}$.

  {\tmstrong{Termination (3)}}: If some correct process completes protocol IVSS[$r$]-$\mathcal{S}$
  and all correct processes
  begin IVSS[$r$]-$\mathcal{R}$ and keep participating, we show that every correct process
  will complete IVSS[$r$]-$\mathcal{R}$. Let $C$ be the set of correct
  processes in $\mathcal{M}$. Since $| \mathcal{M} | \geqslant n - t$, then $| C |
  \geqslant n - 2 t$. Let $\widehat{f_i}$ be the polynomial $i$ ($i \in C$)
  received from the dealer. Since $C \subset \mathcal{M}$, we have
  $\widehat{f_i} (j) = \widehat{f_j} (i)$ for all $i, j \in C$. By Lemma
  \ref{lemma-poly}, there is a symmetric bivariate degree-$t$ polynomial
  $\bar{f}$ such that $\bar{f} (i, j) = \widehat{f_i} (j)$ for all $i, j \in
  C$. Thus $C$ satisfies the conditions in Step \ref{mr-ivss-r-ready} of
  IVSS[$r$]-$\mathcal{R}$. It follows that every correct process will
  complete Step \ref{mr-ivss-r-ready} of IVSS[$r$]-$\mathcal{R}$ and A-Casts
  ``{\tmsamp{ready to complete}}''. Therefore, every correct process will
  receive at least $n - t$ ``{\tmsamp{ready to complete}}'' messages and
  complete IVSS[$r$]-$\mathcal{R}$.

  {\tmstrong{Termination (4)}}: If a correct process completes
  IVSS[$r$]-$\mathcal{R}$, then, since all messages required for completing
  IVSS[$r$]-$\mathcal{R}$ are sent by A-Casting, every
  correct process that keeps participating in IVSS[$r$]-$\mathcal{R}$ will
  receive these messages and complete IVSS[$r$]-$\mathcal{R}$.
  
  We now turn to the correctness properties.
  
  Suppose that a correct process has completed the sharing protocol. By Lemma
  \ref{lemma-poly}, there is a symmetric bivariate degree-$t$ polynomial
  $\bar{f}$ such that $\bar{f} (i, j) = \widehat{f_i} (j)$ for all $i, j \in
  C$ where $C$ is the set of all correct processes in $\mathcal{M}$. We
  denote $\bar{f} (0, 0)$ as $v$.
  
  {\tmstrong{Correctness (1)}}: If some correct process $k$ completes
  IVSS[$r$]-$\mathcal{R}$ and outputs a value different from $v$, then $I S_k$
  must be different from $C$. And there must be some process $i \in I S_k$ and
  some process $j \in C$ such that $\bar{f} (i, j) \neq \widehat{f_i} (j)$,
  otherwise $I S_k$ also interpolates $\bar{f}$ and output
  $\bar{f} (0, 0)$. Since $\bar{f} (i, j) = \bar{f} (j, i) = \widehat{f_j}
  (i)$, we have $\widehat{f_i} (j) \neq \widehat{f_j} (i)$, which means some
  faulty pair will be inferred (we will analysis how many pairs could be inferred
  in the following section).
  
  {\tmstrong{Correctness (2)}}: If the dealer is correct, then
  $\widehat{f_i} (j) = f (i, j)$ for all $i, j \in C$ where $f$ is the
  polynomial selected by the dealer. Thus $\bar{f} = f$ and $v = \bar{f} (0,
  0) = f (0, 0) = s$.
  
  {\tmstrong{Secrecy}}: By polynomial interpolation, the combined view of the $t$ faulty processes is not
  enough to compute the initial random degree-$t$ polynomial selected by the
  dealer. As long as no correct process invokes IVSS[$r$]-$\mathcal{R}$, the
  shared secret is independent of the information obtained by the faulty
  processes. Hence, the faulty processes have no information of the shared
  secret.
  
  So all the IVSS properties hold for IVSS[$r$]. The theorem follows.
\end{proof}

%\subsection{Complexity}
\subsection{Fault-Detection Analysis}

We introduce the following convention for the analysis of Fault-Detection in the certification protocol.
Consider an instance $\mathbbm{R}$ of IVSS[$r$]-$\mathcal{R}$ in
$\tmop{CoreInvocations}^i_r$ for a correct process $i$. If faulty
process $l$ in $\mathcal{M}$ of $\mathbbm{R}$ does not send its polynomial in
Step $1$ of $\mathbbm{R}$, then in round $r'$ (greater than $r$),
no correct process will allow $l$ to appear in $\mathcal{M}$ of
IVSS[$r'$] (see condition (b) of Step \ref{mr-ivss-s-complete} in
IVSS[$r'$]-$\mathcal{S}$ and Step \ref{history-checkPair} of the certification protocol).
This is the best case for correct processes.
Therefore without loss of generality, we use the following convention.

{\smallskip}\tmstrong{Convention. }\tmem{In any instance of IVSS[$r$]-$\mathcal{R}$
and any round $r$, every faulty process in the corresponding set $\mathcal{M}$
eventually A-Casts a polynomial (can be arbitrary) according to Step
\ref{ivss-r-rbf} of IVSS[$r$]-$\mathcal{R}$.}{\hspace*{\fill}}{\smallskip}

Consider an arbitrary instance of IVSS[$r$].
With the above convention, let $\widehat{f_i}$ be the polynomial eventually
A-Casted by process $i \in \mathcal{M}$ in Step $1$ of IVSS[$r$]-$\mathcal{R}$. We
say that a set $S \subset \mathcal{M}$ of at least $n - 2 t$ processes is an
{\tmem{interpolation set}} if there is a symmetric bivariate degree-$t$
polynomial $g$ such that $g (i, j) = \widehat{f_i} (j)$ for all $i \in S$.
Two interpolation sets $S$ and $S'$ are different, if the corresponding
bivariate polynomial are different, which implies $| S \cap S' | \leqslant t$
by Lemma \ref{lemma-poly}.

For an instance $\mathbbm{I}$ of IVSS[$r$], recall that by Lemma
\ref{lemma-poly} the polynomials that processes in $C$ received from the dealer actually define
a unique symmetric bivariate degree-$t$ polynomial, and therefore define a
unique secret $s$. We say that $s$ is the secret defined by $\mathbbm{I}$.
\begin{definition}
Let $\mathbbm{E}$ be the event that at least one of the correct processes output
a value $s'$ in the reconstruction phase of $\mathbbm{I}$ such that $s' \neq s$.
\end{definition}
If $\mathbbm{E}$ never occurs, then we could get a common coin with
high probability (we will show this later in Section \ref{sec-aba}). Thus it
is significant to analyze the situation when $\mathbbm{E}$ occurs.

\begin{lemma}
  $\mathbbm{E}$ could only occur in some instance $\mathbbm{I}$ of
  IVSS[$r$] when $n \leqslant 4 t$.
\end{lemma}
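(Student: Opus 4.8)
The plan is to prove the contrapositive: I will assume that $\mathbbm{E}$ occurs in an instance $\mathbbm{I}$ of IVSS[$r$] and deduce that $n \leqslant 4t$. By the definition of $\mathbbm{E}$, some correct process $k$ completes IVSS[$r$]-$\mathcal{R}$ with an output $s' \neq s$, where $s = \bar{f}(0,0)$ is the secret defined by $\mathbbm{I}$ via the symmetric bivariate degree-$t$ polynomial $\bar{f}$ that Lemma \ref{lemma-poly} attaches to the set $C$ of correct processes in $\mathcal{M}$. The value $s'$ equals $g(0,0)$ for the polynomial $g$ produced from $k$'s interpolation set $S = IS_k$ in Step \ref{mr-ivss-r-ready}. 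Since $s' \neq s$ we have $g \neq \bar{f}$, so $S$ and $C$ are interpolation sets carrying \emph{different} interpolation polynomials.

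First I would verify that $C$ itself qualifies as an interpolation set in the precise sense used before the lemma, namely that $\bar{f}(i,j) = \widehat{f_i}(j)$ holds for every $i \in C$ and every $j \in \mathcal{M}$, not merely for $i,j \in C$. This is the step that needs care, and I expect it to be the main obstacle: Lemma \ref{lemma-poly} only guarantees the match on $C \times C$, whereas the definition of an interpolation set demands agreement with the A-Casted row polynomials across the whole column range $\mathcal{M}$. To bridge the gap, fix $i \in C$ and note that both $\bar{f}(i,\cdot)$ and the A-Casted univariate polynomial $\widehat{f_i}(\cdot)$ have degree $t$ and agree on all of $C$, which has at least $n - 2t \geqslant t+1$ points since $n > 3t$. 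Two degree-$t$ polynomials agreeing on $t+1$ points coincide, so $\bar{f}(i,\cdot) = \widehat{f_i}(\cdot)$ as polynomials, in particular on every $j \in \mathcal{M}$. Together with $|C| \geqslant n - 2t$, this shows $C$ is a bona fide interpolation set.

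Now both $S$ and $C$ are interpolation sets with distinct interpolation polynomials, so the intersection bound noted after the definition (a consequence of Lemma \ref{lemma-poly}) gives $|S \cap C| \leqslant t$. On the other hand $S, C \subseteq \mathcal{M}$ with $|\mathcal{M}| = n - t$ and $|S|, |C| \geqslant n - 2t$, so inclusion–exclusion yields
\[
  |S \cap C| \;\geqslant\; |S| + |C| - |\mathcal{M}| \;\geqslant\; 2(n - 2t) - (n - t) \;=\; n - 3t .
\]
Combining the two bounds gives $n - 3t \leqslant t$, that is $n \leqslant 4t$. Hence $\mathbbm{E}$ cannot occur whenever $n > 4t$, which is exactly the statement of the lemma. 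Once $C$ has been certified as an interpolation set, the remainder is this single inclusion–exclusion count, so all the real content sits in the degree argument of the middle paragraph.
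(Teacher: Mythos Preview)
Your proof is correct and follows essentially the same approach as the paper: both identify $C$ and $IS_k$ as interpolation sets with distinct polynomials, invoke the $|C \cap IS_k| \leqslant t$ bound, and finish with inclusion--exclusion inside $\mathcal{M}$ (the paper phrases it via $|C \cup IS_k| \geqslant 2n-5t > n-t$, which is the same inequality rearranged). Your explicit verification that $C$ is an interpolation set in the required sense---extending the match from $C \times C$ to $C \times \mathcal{M}$ via the degree argument---is a point the paper glosses over, so your write-up is in fact slightly more careful.
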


\begin{proof}
  If $\mathbbm{E}$ occurs, then there are at least two different
  {\tmem{interpolation sets}}. One of these is the set $C$ of correct processes in
  $\mathcal{M}$, the other one is the interpolation set $I S$ causing some correct
  process to output a different secret. Since $| C | \geqslant n - 2 t, | I S | \geqslant n - 2 t, | C
  \cap I S | \leqslant t$, we have $| C \cup I S | = | C | + | I S | - | C
  \cap I S | \geqslant 2 n - 5 t$. If $n > 4 t$, then $| C \cup I S | > n -
  t$. This is impossible since $| C \cup I S | \leqslant | \mathcal{M} | = n - t$.
  Therefore, $\mathbbm{E}$ could only occur when $n \leqslant 4 t$.
\end{proof}

\begin{lemma}
  \label{lemma-infer}If $\mathbbm{E}$ occurs in some instance $\mathbbm{I}$ of
  IVSS[$r$], then at least $t (n - 3 t)$ faulty pairs will be
  inferred by every correct process due to $\mathbbm{I}$.
\end{lemma}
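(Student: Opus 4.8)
The plan is to reduce the statement to a counting problem about a single nonzero bivariate polynomial and then exploit the univariate degree bound repeatedly. Suppose $\mathbbm{E}$ occurs in $\mathbbm{I}$. As in the proof of the previous lemma, the set $C$ of correct processes in $\mathcal{M}$ is an interpolation set, whose polynomial I call $\bar f$ (with $\bar f(0,0)=s$), and there is a second interpolation set $IS$ whose polynomial $g$ satisfies $g\neq\bar f$; put $h:=g-\bar f$, a nonzero symmetric bivariate polynomial of degree at most $t$ in each variable. The two facts I use throughout are that every correct $i\in C$ A-Casts $\widehat{f_i}=\bar f(i,\cdot)$, and every $i\in IS$ A-Casts $\widehat{f_i}=g(i,\cdot)$ (the latter because $\widehat{f_i}$ and $g(i,\cdot)$ are degree-$t$ polynomials agreeing on the $\geq t+1$ points of $\mathcal{M}$). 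By Step~\ref{mr-ivss-infer}, a pair $\{i,j\}\subseteq\mathcal{M}$ enters $FP_k$ precisely when $\widehat{f_i}(j)\neq\widehat{f_j}(i)$, so the whole task is to lower-bound the number of such asymmetric pairs.

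The key elementary observation is a detection count. If a process $i$ A-Casts $\widehat{f_i}\neq\bar f(i,\cdot)$, then, being two distinct degree-$t$ univariate polynomials, they agree in at most $t$ points, so $\widehat{f_i}(j)\neq\bar f(i,j)=\widehat{f_j}(i)$ for at least $|C|-t\geq n-3t$ correct $j\in C$, and each such $\{i,j\}$ is inferred; symmetrically, if $\widehat{f_i}\neq g(i,\cdot)$ then $\{i,j\}$ is inferred for at least $|IS|-t\geq n-3t$ processes $j\in IS$. In particular every faulty process in $IS\setminus C$ whose $h$-row does not vanish identically, and every faulty process outside $C\cup IS$ that cannot imitate $\bar f$ and $g$ at once, is detected against at least $n-3t$ partners. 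I would then assemble the global count from these detections: set $Z=\{p:h(p,\cdot)\equiv0\}$ (so $|Z|\leq t$ and $h(x,y)=\prod_{p\in Z}(x-p)(y-p)\,w(x,y)$ with $w$ symmetric of degree $\leq t-|Z|$ and no identically-zero row), count the inferred pairs inside $(C\setminus IS)\times(IS\setminus C)$ as the nonzeros of $h$ (equivalently of $w$) on that grid via the per-row root bound, and add the forced detections of the faulty processes in $\mathcal{M}\setminus(C\cup IS)$.

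To make the arithmetic meet the target $t(n-3t)$ I would use the structural inputs already available: $|C|\geq n-2t$, $|IS|\geq n-2t$, and $|C\cap IS|\leq t$ (Lemma~\ref{lemma-poly}), which give $|IS\setminus C|\geq n-3t$ and the clean complementary identity $|C\setminus IS|+|\mathcal{M}\setminus(C\cup IS)|=(n-t)-|IS|\leq t$. The main obstacle is exactly this final bookkeeping, because the two sources of inferred pairs trade off: when few faulty processes deviate, $Z$ is large and $w$ has small degree, so nearly the whole grid $(C\setminus IS)\times(IS\setminus C)$ is asymmetric; when many deviate, the count comes instead from their number via the $|C|-t$ and $|IS|-t$ bounds together with the faults outside $IS$. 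I expect the delicate point to be proving that these regimes always combine to at least $t(n-3t)$ — that the degree slack $t-|Z|$ of $w$ and the split of the faults between $IS\setminus C$ and $\mathcal{M}\setminus(C\cup IS)$ compensate each other — which I would settle by controlling the per-row/per-column roots of $w$ with the identities above, checking that the extremes $|\mathcal{M}\setminus(C\cup IS)|=0$ and $|C\cap IS|=t$ are the tight cases.
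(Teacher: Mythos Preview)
Your detection mechanism is exactly right: for any $i\in\mathcal{M}$ with $\widehat{f_i}\neq g(i,\cdot)$ (respectively $\neq\bar f(i,\cdot)$), the degree bound forces at least $|IS|-t$ (resp.\ $|C|-t$) inferred pairs with partners in $IS$ (resp.\ $C$). But the proof is not finished: you explicitly defer the final count to an unexecuted case analysis, and that analysis is where the difficulty lies. Working with the specific sets $C$ and $IS$, faulty processes in $\mathcal{M}\setminus(C\cup IS)$ may set $\widehat{f_i}=\bar f(i,\cdot)$ and so evade detection against $C$, and faulty processes inside $IS\setminus C$ landing in $Z$ evade detection against both sets. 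Balancing the grid count on $(C\setminus IS)\times(IS\setminus C)$ against the ``outside'' detections via the parameters $|Z|$, $|C\cap IS|$, and $|\mathcal{M}\setminus(C\cup IS)|$ is genuinely delicate, and your sketch does not show it closes; in particular, the lower bound from the grid alone can drop to zero when $Z$ swallows one side. (You also omit the paper's opening step showing that $\mathbbm{I}$ lands in enough processes' \texttt{CoreInvocations} so that every correct process actually performs the check.)

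The paper sidesteps the whole bookkeeping with one idea you are missing: pass to \emph{maximal} interpolation sets and take the \emph{smallest} one, call it $S_0$. Maximality gives for free that every $i\in\mathcal{M}\setminus S_0$ satisfies $\widehat{f_i}\neq f^0(i,\cdot)$, so each such $i$ contributes at least $|S_0|-t$ inferred pairs with partners in $S_0$; these pairs are automatically disjoint as $i$ varies, yielding $(n-t-|S_0|)(|S_0|-t)$ in one line. Since there are at least two maximal interpolation sets with intersection of size $\leq t$, the smallest satisfies $n-2t\leq|S_0|\leq n/2$, and the quadratic is minimized at the left endpoint, giving $t(n-3t)$. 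In your language, this amounts to replacing $C$ and $IS$ by $S_{\bar f}=\{i\in\mathcal{M}:\widehat{f_i}=\bar f(i,\cdot)\}\supseteq C$ and $S_g=\{i\in\mathcal{M}:\widehat{f_i}=g(i,\cdot)\}\supseteq IS$ and taking the smaller of the two; once you do that, your $Z$-factorization and the split into inside/outside $C\cup IS$ become unnecessary.
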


\begin{proof}
  When $\mathbbm{E}$ occurs, at least one correct process completes
  instance $\mathbbm{I}$. According to Step \ref{mr-ivss-r-complete} of
  IVSS[$r$]-$\mathcal{R}$, there are at least $n - 2 t$ correct processes that have
  A-Casted ``{\tmsamp{ready to complete}}''. According to Step
  \ref{mr-ivss-r-ready} of IVSS[$r$]-$\mathcal{R}$, these correct processes
  must have added $\mathbbm{I}$ into the set CoreInvocations. In the next
  round $r + 1$, the candidate set $\mathcal{M}$ of any instance of IVSS[$r +
  1$] will contain at least one of these $n - 2 t$ processes since
  $\mathcal{M}= n - t$ and $n - 2 t > t$. Thus $\mathbbm{I}$ will be checked
  by every correct process in the certification protocol. Since the faulty
  pairs are inferred from the polynomials A-Casted by processes in
  $\mathcal{M}$ of $\mathbbm{I}$, all correct processes will
  infer the same faulty pairs. So we only need to prove the lemma for
  correct process $k$.
  
  Let $\{ S_1, S_2 \ldots, S_r \}$ be all maximal interpolation sets with
  respect to the inclusion relation of sets. Since $\mathbbm{E}$ occurs, there must
  be at least two maximal interpolation sets, one of which implies the secret
  $s$ defined by $\mathbbm{I}$ and another of which implies the secret $s' \neq s$, i.e. $r \geqslant 2$.
  Suppose $i, j \in 1, \ldots, r$ and $i \neq j$. By the assumption of maximal
  interpolation sets, $| S_i \cap S_j | \leqslant t$. Let $S_0$ be the
  interpolation set in $\{ S_1, S_2 \ldots, S_r \}$ with the smallest cardinal
  number. Since $| S_i \cup S_j | \leqslant | \mathcal{M} | = n - t$ and $|
  S_i \cap S_j | \leqslant t$, then $| S_i | + | S_j | = | S_i \cup S_j | + | S_i
  \cap S_j | \leqslant n$. Therefore $| S_0 | \leqslant \frac{| S_i | + | S_j
  |}{2} \leqslant \frac{n}{2}$. Also from the definition of the interpolation set,
  we have $| S_0 | \geqslant n - 2 t$.
  
  Suppose the corresponding symmetric bivariate polynomial for $S_0$ is $f^0$.
  Let $f_i^0$ be the polynomial with $f_i^0 (j) = f^0 (i, j)$. Since $f^0$ is
  symmetric, $f_i^0 (j) = f^0 (i, j) = f^0 (j, i) = \widehat{f_j} (i)$ for
  every $j \in S_0$. Recall that $\widehat{f_j}$ is the polynomial
  eventually A-Casted by process $j \in \mathcal{M}$ in Step $1$ of
  IVSS[$r$]-$\mathcal{R}$ of instance $\mathbbm{I}$. For any $i \in
  \mathcal{M}$ but $i \nin S_0$, we have $\widehat{f_i} \neq f_i^0$ because
  otherwise $S_0 \cup i$ is an interpolation set bigger than $S_0$,
  which contradicts the fact that $S_0$ is maximal. Since $\widehat{f_i} \neq
  f_i^0$, $\widehat{f_i} - f_i^0$ has at most $t$ zero points. So there are at
  least $| S_0 | - t$ processes $j$ in $S_0$ such that $\widehat{f_i} (j) \neq
  f_i^0 (j)$, i.e. $\widehat{f_i} (j) \neq \widehat{f_j} (i)$ (since $f_i^0
  (j) = \widehat{f_j} (i)$ for $j \in S_0$) which leads to the faulty pair $\{ i,
  j \}$. Therefore, for each $i \in \mathcal{M}$ but $i \nin S_0$, $k$ will
  infer at least $| S_0 | - t$ faulty pairs. In total, $k$ could infer at
  least $(| S_0 | - t) (n - t - | S_0 |)$ faulty pairs. Since $n - 2 t
  \leqslant | S_0 | \leqslant \frac{n}{2}$, then $(| S_0 | - t) (n - t - | S_0 |)
  \geqslant (n - 3 t) t$. The lemma is proved.
\end{proof}

In the lemma above, we show that a set of faulty pairs will eventually be
inferred if $\mathbbm{E}$ occurs in an instance of IVSS[$r$]. However,
``eventually'' is not enough to improve running time. In the next lemma, we will
show that the faulty pairs inferred from instance of IVSS[$r$] will not appear
in candidate set $\mathcal{M}$ of IVSS[$r + 1$] even though these faulty
pairs might be inferred after the invocation of IVSS[$r + 1$].

\begin{lemma}
  \label{lemma-immediate}If $\mathbbm{E}$ occurs in some instance
  $\mathbbm{I}^r$ of IVSS[$r$], and $\{ i, j \}$ is eventually inferred as
  faulty pairs by the correct processes due to $\mathbbm{I}^r$, then $i$ and $j$
  could not appear simultaneously in the set $\mathcal{M}$ of any instance of
  IVSS[$r'$] with $r' > r$.
\end{lemma}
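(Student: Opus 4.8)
The plan is to show that the candidate set of any IVSS[$r'$] with $r' > r$ cannot be accepted by a correct process while containing both $i$ and $j$. Recall that in Step \ref{mr-ivss-s-complete} a correct process $k$ accepts a candidate set $\mathcal{M}$ only after receiving, for the pair $\{ i, j \}$ and \emph{every} $p, q \in \mathcal{M}$, a message ``checked$_{r'}: p, q, \{ i, j \}$'' A-Cast by $p$. So it suffices to exhibit one $q \in \mathcal{M}$ and one correct $p \in \mathcal{M}$ for which $p$ never A-Casts this message: by the properties of A-Cast, no correct process ever completes an A-Cast that a correct sender $p$ never initiates, so condition (b) of Step \ref{mr-ivss-s-complete} fails and no correct process ever accepts such an $\mathcal{M}$.

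First I would locate the process $q$. Since $\mathbbm{E}$ occurs in $\mathbbm{I}^r$, the argument used in the proof of Lemma \ref{lemma-infer} shows that at least $n - 2 t$ correct processes A-Cast ``ready to complete'' and place $\mathbbm{I}^r$ into their $\tmop{CoreInvocations}_r$, which they A-Cast at the start of round $r + 1$. Call this set $G$; then $| G | \geqslant n - 2 t$. As $| \mathcal{M} | = n - t$, the inclusion-exclusion bound gives $| G \cap \mathcal{M} | \geqslant (n - 2 t) + (n - t) - n = n - 3 t \geqslant 1$, so I may pick a correct $q \in G \cap \mathcal{M}$. Because $q$ is correct and $r < r'$, the history $\cup_{r'' < r'} \tmop{CoreInvocations}^q_{r''}$ seen by any correct process contains $\mathbbm{I}^r$, and $i, j$ both lie in the set $\mathcal{M}$ of $\mathbbm{I}^r$ (this is what ``inferred due to $\mathbbm{I}^r$'' means).

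Next I would argue that no correct $p \in \mathcal{M}$ ever A-Casts ``checked$_{r'}: p, q, \{ i, j \}$''. When such a $p$ runs Step \ref{history-checkPair} against $q$'s history for the pair $\{ i, j \}$, it checks conditions a, b, c in that order. Condition b, applied to the invocation $\mathbbm{I}^r$ in $q$'s history with $i, j \in \mathcal{M}$ of $\mathbbm{I}^r$, forces $p$ to wait for the reconstruction-phase polynomials A-Cast by $i$ and $j$ in Step \ref{ivss-r-rbf}. By the Convention these polynomials are eventually A-Cast (even if $i$ or $j$ is faulty) and, by A-Cast correctness, $p$ receives them; since $\{ i, j \}$ was inferred as a faulty pair due to $\mathbbm{I}^r$, they satisfy $\widehat{f_i} (j) \neq \widehat{f_j} (i)$, so Step \ref{mr-ivss-infer} adds $\{ i, j \}$ to $F P_p$. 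Thus when $p$ reaches condition c it finds $\{ i, j \} \in F P_p$ and does not A-Cast the checked message. Since at least one correct $p$ exists in $\mathcal{M}$ (any candidate set contains $\geqslant n - 2 t > t$ correct processes), condition (b) of Step \ref{mr-ivss-s-complete} fails for $\{ i, j \}$, proving the claim.

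The delicate point is the ordering of events: one must be sure that $p$'s inference of $\{ i, j \}$ necessarily precedes---rather than races with---any checked A-Cast it might emit. This is exactly what the ``check in order a, b, c'' discipline of Step \ref{history-checkPair} guarantees, since condition b compels $p$ to receive the conflicting polynomials of $\mathbbm{I}^r$ and update $F P_p$ before the membership test of condition c is evaluated. Establishing this timing guarantee, together with the counting step that places a correct carrier $q$ of $\mathbbm{I}^r$ inside every future candidate set, is the crux of the argument.
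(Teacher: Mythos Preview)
Your proof is correct and follows essentially the same approach as the paper: locate a correct process in $\mathcal{M}$ that carries $\mathbbm{I}^r$ in its CoreInvocations (your $q$, the paper's $l$), and then argue that no correct $p \in \mathcal{M}$ can ever A-Cast ``checked$_{r'}: p, q, \{i,j\}$'' because satisfying condition (b) of Step~\ref{history-checkPair} forces $p$ to see the conflicting polynomials and hence add $\{i,j\}$ to $FP_p$ before condition (c) is tested. The only cosmetic differences are that the paper phrases it as a proof by contradiction and uses the pigeonhole bound $|S| \geqslant t+1$, $|\mathcal{M}| = n-t$ rather than your inclusion--exclusion count $|G \cap \mathcal{M}| \geqslant n - 3t$; neither affects the substance of the argument.
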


\begin{proof}
  Since $\mathbbm{E}$ occurs, there must be a correct process (say $k$) that
  completes instance $\mathbbm{I}^r$. Then, by Step
  \ref{mr-ivss-r-complete} of IVSS[$r$]-$\mathcal{R}$, $k$ must have received
  ``{\tmsamp{ready to complete}}'' from $n - t$ processes. According to Step
  \ref{mr-ivss-r-ready} of IVSS[$r$]-$\mathcal{R}$, these $n - t$ processes
  must have added instance $\mathbbm{I}^r$ into
  $\tmop{CoreInvocations}^{\ast}_r$. Then there are at least $n - 2 t$
  correct processes (denoted by $S$) that have added instance
  $\mathbbm{I}^r$ into $\tmop{CoreInvocations}_{\ast, r}$.
  
  Now consider round $r' > r$. In any instance $\mathbbm{I}^{r'}$ of
  IVSS[$r'$], set $\mathcal{M}$, A-Casted by the dealer, contains at least one
  correct process (denoted by $l$) from $S$ since $| \mathcal{M} | \geqslant n
  - t$ and $| S | \geqslant n - 2 t \geqslant t + 1$. If $i$ and $j$ are both
  in the set $\mathcal{M}$ of $\mathbbm{I}^{r'}$, then every correct process
  $k'$ in $\mathcal{M}$ must A-Cast ``{\tmsamp{checked}}$_{r'} \nocomma, k',
  l, \{ i, j \}$'' according to Step \ref{mr-ivss-s-complete} of
  IVSS[$r'$]-$\mathcal{S}$. By Step \ref{history-checkPair} of our
  certification protocol, $k'$ must have received the corresponding
  polynomials of $i$ and $j$ A-Cast in Step \ref{ivss-r-rbf} of
  IVSS[$r'$]-$\mathcal{R}$. However, this would make $k'$ add $\{ i, j \}$ into
  $F P_{k'}$ and not A-Cast ``{\tmsamp{checked}}$_{r'} \nocomma : k', l, \{ i,
  j \}$''. This is a contradiction. Therefore, $i$ and $j$ could not appear
  simultaneously in candidate set $\mathcal{M}$ of any instance of IVSS[$r'$] with
  $r' > r$.
\end{proof}

\begin{lemma}
  \label{lemma-error-rounds}
  If $n = 3t + \delta$, then there are at most $\frac{3 t}{\delta} + 1$ rounds
  where $\mathbbm{E}$ occurs.
\end{lemma}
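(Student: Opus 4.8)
The plan is to bound the number of rounds in which $\mathbbm{E}$ occurs by a counting argument on inferred faulty pairs, combining the per-round lower bound of Lemma \ref{lemma-infer} with the exclusion guarantee of Lemma \ref{lemma-immediate}. First I would set $\delta = n - 3t$, so that Lemma \ref{lemma-infer} says each round in which $\mathbbm{E}$ occurs causes every correct process to infer at least $t\delta$ faulty pairs. Suppose $\mathbbm{E}$ occurs in rounds $r_1 < \cdots < r_m$, and for each $k$ let $\mathcal{M}_k$ be the candidate set of the instance of IVSS[$r_k$] in which $\mathbbm{E}$ occurs. By Step \ref{mr-ivss-infer} of the certification protocol, each of the at least $t\delta$ pairs inferred in round $r_k$ has both of its endpoints in $\mathcal{M}_k$.

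The heart of the argument, and the step I expect to be the main obstacle, is to show that the pair-sets inferred in distinct rounds are pairwise disjoint. To see this, fix $j < k$. Lemma \ref{lemma-immediate} guarantees that any faulty pair inferred because of the round-$r_j$ instance cannot have both endpoints in the candidate set $\mathcal{M}_k$ of the later round $r_k$; hence it has at most one endpoint in $\mathcal{M}_k$. Since every pair inferred in round $r_k$ lies entirely inside $\mathcal{M}_k$, no pair can be counted in both rounds. It follows that after all $m$ rounds the correct processes have inferred at least $m \cdot t\delta$ distinct faulty pairs. This is exactly where Lemma \ref{lemma-immediate} is indispensable: without it, Lemma \ref{lemma-infer} alone would only produce $t\delta$ pairs per round that might be shared across rounds, giving no usable global bound.

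Finally I would bound the total supply of faulty pairs. Since at most $t$ processes are faulty and every faulty pair contains a faulty process, the number of unordered faulty pairs is at most $tn - \tfrac{t(t+1)}{2} < tn$. Combining the two bounds yields $m \cdot t\delta < tn = t(3t+\delta)$, and dividing by $t\delta$ gives $m < \tfrac{3t}{\delta} + 1$, which establishes the claimed bound. The remaining arithmetic is routine, so the only delicate point is the disjointness claim above.
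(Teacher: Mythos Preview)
Your proposal is correct and follows essentially the same counting argument as the paper: use Lemma~\ref{lemma-infer} for the per-round lower bound, Lemma~\ref{lemma-immediate} for disjointness of the inferred pairs across rounds, and the trivial $tn$ upper bound on the total number of faulty pairs. Your disjointness justification is in fact more explicit than the paper's (which simply asserts the sets are ``different''), and your upper bound $tn - t(t+1)/2 < tn$ yields the same conclusion with a strict inequality.
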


\begin{proof}
  Suppose $\mathbbm{E}$ occurs in round $r_1, r_2, \ldots, r_c$ and denote the faulty
  pairs that could be inferred for these rounds by $S_1, S_2, \ldots, S_c$. By
  Lemma \ref{lemma-immediate}, $S_i$ is different from $S_j$ for $1
  \leqslant i, j \leqslant c$ and $i \neq j$. According to Lemma \ref{lemma-infer}, there will
  be at least $c \cdot (n - 3 t) t$ different faulty pairs inferred. Since each faulty process
  can only appear in $n$ faulty pairs, we have $t \cdot n \geqslant c \cdot (n
  - 3 t) t$. Thus, $c \leqslant \frac{t n}{(n - 3 t) t} = \frac{3 t}{\delta} +
  1$.
\end{proof}

\section{From IVSS to Asynchronous Byzantine Agreement}\label{sec-aba}

Using our IVSS[$r$] protocol, we now design an ABA protocol (following the
reduction scheme of Canetti and Rabin {\cite{canetti1993fast}}).
The first step is to get a common coin. In the common coin protocol of
{\cite{canetti1993fast}}, every process shares $n$ random
secrets using $n$ different invocations of the AVSS protocol of
{\cite{canetti1993fast}}. Following Figure 5-9 of {\cite{canetti1996studies}}
and using our IVSS[$r$] protocol, we obtain an {\tmem{Inferable Common Coin}} (ICC)
protocol which always terminates.

\begin{definition}[ICC]
  Let $\pi$ be any protocol where every process has a random input and a binary
  output. We say that $\pi$ is a terminating, $t$-resilient Inferable Common
  Coin protocol if the following properties (called \tmem{ICC properties}) hold for
  every $t$-adversary.
  \begin{itemize}
    \item {\tmstrong{Termination}}.
    \begin{enumeratenumeric}
      \item If all correct processes keep participating in $\pi$,
      then every correct process eventually completes.

      \item If some correct process completes $\pi$, then every other correct
      process that keeps participating in $\pi$ eventually completes.
    \end{enumeratenumeric}
    
    \item {\tmstrong{Correctness}}. For every invocation, either
    \begin{itemize}
      \item for each $v \in \{ 0, 1 \}$, with probability at least $1 /
      4$, every correct process upon completing $\pi$ outputs $v$; or
      
      \item a set of faulty pairs is eventually inferred by correct processes.
    \end{itemize}
  \end{itemize}
\end{definition}

\begin{lemma}
  For $n > 3 t$ and each round $r$, there is a terminating, $t$-resilient
  Inferable Common Coin protocol.
\end{lemma}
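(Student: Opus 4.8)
The plan is to instantiate the common-coin construction of Canetti and Rabin \cite{canetti1993fast} (as laid out in Figure 5-9 of \cite{canetti1996studies}) with our IVSS[$r$] protocol in place of AVSS, and then to trace how the weakening from AVSS to IVSS propagates into the two branches of the ICC correctness property. Concretely, the protocol I would write down has every process act as the dealer of $n$ independent random secrets over $\mathbb{Z}_n$, one secret $x_{i,u}$ per ordered pair (dealer $i$, column $u$), each shared through a separate IVSS[$r$]-$\mathcal{S}$ instance. The processes then run the gather / agreement-on-a-core-set subprotocol of \cite{canetti1993fast} to fix a common set $C$ of at least $n-t$ dealers all of whose $n$ sharings have completed; this subprotocol deliberately avoids invoking Byzantine agreement, so no circularity arises. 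Finally each process reconstructs, via IVSS[$r$]-$\mathcal{R}$, every secret dealt by a member of $C$, sets $V_u = \sum_{i \in C} x_{i,u} \bmod n$ for each column $u$, and outputs $0$ if some $V_u = 0$ and $1$ otherwise.

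First I would establish the two termination properties, which are essentially bookkeeping on top of the IVSS termination guarantees. If all correct processes keep participating, then every sharing dealt by a correct dealer completes by IVSS Termination (1), the gather subprotocol terminates and fixes $C$, and every reconstruction of a secret of a member of $C$ completes by IVSS Termination (3); hence every correct process outputs a bit. For the second property I would use IVSS Termination (2) and (4) together with the termination of gather to argue that once one correct process has completed, the A-Cast messages it relied upon reach every other participating correct process, which therefore also completes.

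The correctness property is where the IVSS/AVSS gap matters, and I would prove it by a case split on whether the event $\mathbbm{E}$ occurs in any of the IVSS instances whose secrets feed into some $V_u$. If $\mathbbm{E}$ occurs in at least one such instance, then Lemma \ref{lemma-infer} immediately gives that every correct process infers a nonempty set of faulty pairs (of size at least $t(n-3t)$), which is exactly the second disjunct of ICC correctness. If $\mathbbm{E}$ never occurs, then by the first disjunct of IVSS Correctness (1) every correct process reconstructs, for each dealer $i \in C$ and column $u$, the same committed value $x_{i,u}$; hence all correct processes agree on $C$ and on every $V_u$, and so output the same coin. It remains to bound the bias. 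Here I would invoke the Secrecy property together with the ordering of the protocol: the core set $C$ and all faulty dealers' secrets are fixed during the sharing phase, before any reconstruction, so the adversary's view is independent of the honest dealers' secrets at that point. Since $|C| \geq n-t$ forces $C$ to contain at least one correct dealer $h$, and that dealer's secrets $\{x_{h,u}\}_u$ are independent and uniform over $\mathbb{Z}_n$, each $V_u$ is uniform over $\mathbb{Z}_n$ and the family $\{V_u\}_{u=1}^n$ is independent, regardless of the adversary's choices. Consequently the probability that the output is $1$ equals $(1-1/n)^n$ and the probability that it is $0$ equals $1-(1-1/n)^n$; since $(1-1/n)^n$ lies in $[1/4,\,1/e)$ for every $n \geq 2$, both probabilities are at least $1/4$, as required.

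The main obstacle, I expect, is the bias argument rather than termination: I must justify carefully that every column value $V_u$ --- including those indexed by faulty columns --- is genuinely uniform and that the $V_u$ are mutually independent. This rests on (i) the Secrecy property guaranteeing that the adversary learns nothing about honest shares until reconstruction, and (ii) the fact that $C$ is committed before reconstruction begins, so the adversary cannot adapt its dealt secrets to the honest dealer's randomness. A secondary subtlety is confirming that the gather subprotocol of \cite{canetti1993fast} retains its guarantees when the underlying primitive is IVSS rather than AVSS: its termination and common-core properties depend only on the termination and the binding-in-sharing behaviour of the secret-sharing scheme, both of which IVSS provides, so the substitution is sound.
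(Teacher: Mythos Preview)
Your high-level plan---instantiate Figure~5-9 of \cite{canetti1996studies} with IVSS[$r$] in place of AVSS and then case-split on whether the event $\mathbbm{E}$ occurs---is exactly the paper's approach. The gap is in your description of the protocol itself. You assume that the ``gather / agreement-on-a-core-set subprotocol of \cite{canetti1993fast}'' fixes a \emph{single common} set $C$ of at least $n-t$ dealers, and your whole bias argument (that every correct process computes the same family $\{V_u\}_{u=1}^n$, hence the output-1 probability is exactly $(1-1/n)^n$) rests on that. But the Canetti--Rabin construction that avoids Byzantine agreement does \emph{not} deliver a common core set; achieving a common set asynchronously without BA would itself solve ABA. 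What it delivers instead is, for each process $i$, a family of growing local sets $\mathcal{T}_i,\mathcal{A}_i,\mathcal{S}_i$, together with a structural guarantee (the paper's Lemma~\ref{icc-m}) that there is a common subset $M$ of size at least $n/3$ contained in every correct process's output view. The column value $v_j$ is not $\sum_{i\in C}x_{i,j}$ over a common $C$, but $\sum_{k\in T_j}x_{k,j}$ over the $(t{+}1)$-set $T_j$ that process $j$ itself A-Cast.

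This changes the bias analysis in both directions. For the all-$0$ event you can only rely on the $|M|\geq n/3$ common columns, not all $n$; for the all-$1$ event you must rule out $v_k=0$ over as many as $n$ columns. Balancing these two constraints is precisely why the paper takes the modulus to be $u=\lceil 0.87n\rceil$ rather than $n$, and proves $1-(1-1/u)^{|M|}\geq 1/4$ and $(1-1/u)^n\geq 1/4$ separately (Lemma~\ref{icc-correctness}). Your termination sketch is likewise too quick: because different correct processes may initiate reconstructions for different columns, IVSS Termination~(3) does not apply directly, and the paper needs a contradiction argument (Lemma~\ref{icc-termination}) showing that eventually every correct process joins every reconstruction any correct process started. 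The secrecy/independence step you identify as the ``main obstacle'' is, by contrast, essentially fine and is handled in the paper as Lemma~\ref{icc-associate}.
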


\begin{proof}
  The protocol implementing ICC by using our IVSS[$r$] subprotocol
  is a slight variant of figure 5-9 of {\cite{canetti1996studies}}. We call this
  protocol ICC[$r$]. The proof is in Appendix \ref{app-icc}.
\end{proof}

The second step is to use the common coin protocol to get an ABA protocol. In
{\cite{canetti1993fast}}, Canetti and Rabin use their
common coin protocol (that terminates with probability $1 - \varepsilon$) to
get an ABA protocol (that terminates with probability $1 - \varepsilon$). We
replace the common coin protocol of {\cite{canetti1993fast}} by ICC[$r$] to
obtain our almost-surely terminating ABA protocol.

\begin{theorem}[Byzantine Agreement]
  If $n = 3 t + \delta$, then there is an almost-surely terminating ABA
  protocol with expected running time $O ( \frac{t}{\delta} )$.
\end{theorem}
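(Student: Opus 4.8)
The plan is to instantiate the Canetti--Rabin reduction from a common coin to Byzantine agreement, using ICC[$r$] as the coin consulted in round $r$, and then to charge the running time to two disjoint sources: a constant expected number of rounds in which a good coin suffices to reach agreement, and a bounded number of rounds in which the coin may misbehave. First I would spell out the protocol: each correct process carries a current binary estimate and proceeds round by round; in round $r$ it performs the standard threshold voting exchange of the reduction and then invokes ICC[$r$], using the coin value only as the default estimate adopted by processes that remain undecided after voting.

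Next I would verify the two safety requirements of the ABA definition. Validity is immediate: if all correct processes share input $v$, the voting thresholds force every correct process to decide $v$ before the coin is ever consulted. Agreement I would argue holds \emph{unconditionally}, for any behaviour of the coin: the threshold structure guarantees that once some correct process decides $v$ in round $r$, every correct process enters round $r+1$ holding estimate $v$, so no conflicting decision is possible even if correct processes receive different coin values. This part uses only the \textbf{Termination} property of ICC[$r$] (so that every round completes), not the quality of its randomness.

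The heart of the argument is the running-time bound. Call round $r$ \emph{good} if the event $\mathbbm{E}$ does not occur in any IVSS[$r$] instance inside ICC[$r$], and \emph{bad} otherwise. By the \textbf{Correctness} property of ICC, in a good round the coin delivers each value $v \in \{ 0, 1 \}$ to all correct processes simultaneously with probability at least $1/4$, and the standard reduction analysis then shows a good round reaches agreement with at least this constant probability. By Lemma \ref{lemma-error-rounds}, across the entire execution there are at most $\frac{3t}{\delta} + 1$ bad rounds. Treating each good round as an independent success trial with probability bounded below by a constant, while bad rounds are capped at $O(t/\delta)$, the expected number of rounds is $\frac{3t}{\delta} + 1 + O(1) = O(t/\delta)$. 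Almost-sure termination follows from the same decomposition: only finitely many rounds are bad, and the infinitely many good rounds each succeed with probability bounded away from zero, so agreement is reached with probability one.

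The step I expect to be the main obstacle is making this charging argument rigorous despite the ``eventually inferred'' phrasing in the ICC correctness property: the faulty pairs certifying a bad round may be inferred only in a later round, so a priori the adversary might make the same configuration fail repeatedly before detection catches up. This is exactly what Lemma \ref{lemma-immediate} rules out --- a pair inferred faulty from an instance of IVSS[$r$] can never again appear together in a candidate set $\mathcal{M}$ of any IVSS[$r'$] with $r' > r$ --- which is what licenses the clean count in Lemma \ref{lemma-error-rounds} even though detection is delayed. I would therefore route the running-time bound through Lemma \ref{lemma-error-rounds} rather than attempting to recognize bad rounds online, confirming that the adversary cannot force the coin to be bad without permanently consuming at least $t(n - 3t)$ fresh faulty pairs, of which only $O(tn)$ exist in total.
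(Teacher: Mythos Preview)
Your proposal is correct and matches the paper's approach essentially line for line: the paper instantiates the Canetti--Rabin reduction with a Vote protocol plus ICC[$r$] per round, proves validity and unconditional agreement exactly as you sketch, and obtains the running-time bound by combining Lemma~\ref{lemma-error-rounds} (at most $\frac{3t}{\delta}+1$ bad rounds) with the constant success probability in good rounds, arriving at $E(\tau) \le \frac{3t}{\delta} + 17$. Your identification of Lemma~\ref{lemma-immediate} as the key device that makes the ``eventually inferred'' clause harmless is also precisely how the paper justifies the clean count.
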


\begin{proof}
  By Lemma \ref{lemma-error-rounds}, we know there are at most $\frac{3
  t}{\delta} + 1$ rounds in which the adversary could break the correctness of secret
  sharing. In the rest of the rounds, all correct processes reconstruct
  the same value and this value is equal to the secret of the dealer if the
  dealer is correct, with which we can have a common coin that is sufficient for
  Byzantine agreement with constant expected running time. Therefore, the
  expected running time of our ABA protocol is $O ( \frac{t}{\delta} )$.
  we give the details in Appendix \ref{app-aba}.
\end{proof}

If we take $\delta = 1$ in the above theorem, we have the following corollary,
which improves the result of Abraham, Dolev, and Halpern
{\cite{abraham2008almost}}.

\begin{corollary}
  If $n = 3 t + 1$, then there is an almost-surely terminating, optimally
  resilient ABA protocol with expected running time $O (t)$.
\end{corollary}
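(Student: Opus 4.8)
The plan is to obtain this as a direct specialization of the preceding Byzantine Agreement theorem, instantiated at $\delta = 1$. When $n = 3t + 1$, I write $n = 3t + \delta$ with $\delta = 1$. Since $\delta = 1 > 0$, this gives $n > 3t$, which is exactly the definition of optimal resilience stated in the introduction. Thus the hypothesis $n = 3t + \delta$ of the theorem is met with the specific value $\delta = 1$, and the protocol it produces is automatically optimally resilient.

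Next I would simply read off the two promised guarantees from the theorem. The theorem asserts that the resulting ABA protocol is almost-surely terminating; this property is inherited verbatim, since it does not depend on the particular value of $\delta$. For the running time, the theorem gives an expected running time of $O(t/\delta)$, and substituting $\delta = 1$ yields $O(t/1) = O(t)$. Combining these, the protocol is almost-surely terminating, optimally resilient, and runs in $O(t)$ expected rounds, which is precisely the claim.

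I do not expect any genuine obstacle here: the corollary is an immediate consequence of the theorem, and the word ``Corollary'' signals exactly this. The one point I would take care to confirm is that the hidden constant in the bound $O(t/\delta)$ of the theorem does not itself depend on $t$ (or on $n$), so that setting $\delta = 1$ really produces a clean $O(t)$ bound rather than concealing extra growth. This is already ensured by the derivation of that bound: the theorem's running time traces back through Lemma~\ref{lemma-error-rounds}, which caps the number of rounds in which $\mathbbm{E}$ can occur at $\frac{3t}{\delta} + 1 = 3t + 1$ when $\delta = 1$, with the remaining rounds contributing only the constant-expected-time common coin cost. Hence the $O(t)$ bound is honest, and nothing beyond invoking the theorem with $\delta = 1$ is required.
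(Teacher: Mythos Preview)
Your proposal is correct and matches the paper's approach exactly: the paper derives this corollary simply by taking $\delta = 1$ in the preceding Byzantine Agreement theorem. Your additional remark about the hidden constant being independent of $t$ is a careful check that goes slightly beyond what the paper states explicitly, but is entirely consistent with it.
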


If we take $\delta = \varepsilon t$ where $\varepsilon > 0$, we have the
following corollary, which improves the result of Feldman and Micali {\cite{feldman1988optimal}}.

\begin{corollary}
  If $n = (3 + \varepsilon) t$ where $\varepsilon > 0$, then there is an
  almost-surely terminating ABA protocol with expected running time $O (1 / \varepsilon)$.
\end{corollary}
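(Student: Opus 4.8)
The plan is to instantiate the standard reduction from a common coin to asynchronous Byzantine agreement due to Canetti and Rabin \cite{canetti1993fast} (see also the detailed treatment in \cite{canetti1996studies}), using our ICC[$r$] protocol as the coin in round $r$, following the description already promised in the theorem's proof sketch. Recall that in this reduction each correct process proceeds round by round, running in each round a vote-exchange phase built on A-Cast followed by a single invocation of the common coin to break symmetry among still-undecided processes. The structural property I would rely on is that \emph{safety}---the Agreement and Validity parts of the ABA definition---is enforced by the voting mechanism alone and is independent of how the coin behaves: an adversarially controlled coin can only prevent progress in a round, never cause two correct processes to decide differently or to decide a value held by no correct process. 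The coin affects only \emph{liveness}: if in a given round the coin is \emph{good}, meaning that for each $v\in\{0,1\}$ all correct processes output $v$ simultaneously with probability at least $1/4$, then that round drives all correct processes to a common value with at least constant probability, after which agreement is locked in within $O(1)$ further rounds.

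Next I would make the notion of a good round precise via the Correctness property of ICC: in every invocation, either the coin is good (each value with probability at least $1/4$) or a set of faulty pairs is eventually inferred. Hence every round in which the coin fails to be good is a round in which the event $\mathbbm{E}$ occurs in some underlying IVSS instance. Combining this with Lemma \ref{lemma-immediate} (inferred faulty pairs are permanently barred from future candidate sets $\mathcal{M}$) and Lemma \ref{lemma-error-rounds} (which bounds the number of rounds in which $\mathbbm{E}$ can occur by $\frac{3t}{\delta}+1$ when $n=3t+\delta$), I conclude that at most $\frac{3t}{\delta}+1$ rounds over the \emph{entire} execution can have a non-good coin; all other rounds are good.

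To bound the expected running time, I would note that each round consists of a constant number of A-Cast invocations together with one ICC[$r$] invocation, each of which terminates by the A-Cast Termination property and the ICC Termination property, so each round costs $O(1)$ expected time. Writing $R=\frac{3t}{\delta}+1$ for the bound on non-good rounds, once these at most $R$ rounds have elapsed every subsequent round is good, so the number of additional rounds until all correct processes agree is dominated by a geometric random variable with constant success probability and thus has expectation $O(1)$. Therefore the expected number of rounds to reach agreement is at most $R+O(1)=O(\frac{t}{\delta})$, and multiplying by the $O(1)$ cost per round yields expected running time $O(\frac{t}{\delta})$. Almost-sure termination follows from the same geometric argument: after the finitely many non-good rounds, each round independently reaches agreement with probability bounded below by a positive constant, so agreement is attained with probability one.

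The main obstacle I anticipate is the liveness bookkeeping rather than safety. I must verify that a non-good round, in which $\mathbbm{E}$ occurs and some correct process reconstructs a wrong secret, cannot corrupt the vote-exchange phase so as to violate safety; this is where the precise way ICC aggregates the $n$ parallel secret-sharing instances matters. Equally delicate is confirming that the fault inference of Lemma \ref{lemma-infer} does not merely happen ``eventually'' but takes effect before the candidate sets $\mathcal{M}$ of the next round are fixed---exactly the content of Lemma \ref{lemma-immediate}---so that the bound of Lemma \ref{lemma-error-rounds} genuinely counts non-good rounds \emph{globally} across the whole run and they contribute only an additive $O(\frac{t}{\delta})$ term rather than recurring indefinitely.
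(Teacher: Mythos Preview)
Your argument is essentially a correct (if informal) re-derivation of the paper's main theorem, which establishes expected running time $O(t/\delta)$ when $n=3t+\delta$; the reasoning you sketch---Canetti--Rabin reduction with the Vote protocol guaranteeing safety unconditionally, ICC Correctness branding each round as either ``good'' or as one in which $\mathbbm{E}$ occurs, Lemma~\ref{lemma-error-rounds} bounding the latter by $3t/\delta+1$, and a geometric tail on the good rounds---matches what the paper carries out in Appendix~\ref{app-aba}. One small imprecision: the at most $R=3t/\delta+1$ bad rounds need not be the \emph{first} $R$ rounds, so ``once these have elapsed every subsequent round is good'' is not literally correct; the paper instead bounds $\Pr(\tau>k)\le(3/4)^{k-R}$, which is the right way to phrase the geometric argument with up to $R$ interspersed failures.

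However, the statement you were asked to prove is the \emph{corollary}, and in the paper its proof is a one-line specialization of the theorem you have re-proved: if $n=(3+\varepsilon)t$ then $\delta=n-3t=\varepsilon t$, so $O(t/\delta)=O(1/\varepsilon)$. Your write-up stops at $O(t/\delta)$ and never makes this substitution, so as written it does not actually conclude the $O(1/\varepsilon)$ bound that the corollary asserts. Add that final line and you are done; everything else you wrote is supporting material for the theorem, not for the corollary itself.
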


\newpage

\bibliographystyle{acm}
\bibliography{fast}

\newpage
\normalsize
\appendices

\section{A-Cast Protocol}\label{app-acast}

%For the sake of completeness, we recall the A-Cast protocol from {\cite{canetti1996studies}}.

\code{
%\begin{tmpadded}
  {{\tmem{{\tmstrong{A-Cast Protocol}}}}:}
}{
  \begin{enumeratenumeric}
    \item The sender with input $u$ sends ``{\tmsamp{msg}}: $u$'' to all
    processes.
    
    \item $i$ waits until receiving ``{\tmsamp{msg}}: $u$''. Then $i$ sends
    ``{\tmsamp{echo}}: $u$'' to all processes.
    
    \item $i$ waits until receiving $n - t$ ``{\tmsamp{echo}}: $u'$'' that
    agree on the value of $u'$. Then $i$ sends ``{\tmsamp{ready}}: $u'$'' to
    all processes.
    
    \item $i$ waits until receiving $t + 1$ ``{\tmsamp{ready}}: $u'$'' that
    agree on the value of $u'$. Then $i$ sends ``{\tmsamp{ready}}: $u'$'' to
    all processes.
    
    \item $i$ waits until receiving $2 t + 1$ ``{\tmsamp{ready}}: $u'$'' that
    agree on the value of $u'$. Then $i$ outputs $u'$ and completes the
    protocol.
  \end{enumeratenumeric}
  %{\hspace*{\fill}}}
%\end{tmpadded}
}

%\newpage

\section{Inferable Common Coin Protocol}\label{app-icc}

Our implementation (called ICC[$r$]) of ICC follows
{\cite{canetti1993fast}}. Roughly speaking, the protocol consists of two phases.
First, every process shares $n$ random secrets using our IVSS[$r$]-$\mathcal{S}$
protocol. The $i$th secret shared by each process is assigned to process $i$.
Once a process $i$ completes $t+1$ sharing protocols of secrets assigned to it,
$i$ A-Casts the identity of the dealers of these secrets. After this, by the
correctness property of IVSS[$r$], a fixed value (yet unknown) is \tmem{attached} to $i$.
The second phase is to select a subset of processes (say $H$) and reconstruct the
attached values of $H$. Different processes may choose different $H$ to
reconstruct secrets. However, if an instance of IVSS[$r$]-$\mathcal{R}$ is
invoked by a strict subset of correct processes, then there is no guarantee of termination.
Hence, in ICC[$r$] we require every process to A-Cast its $H$ before completion,
so that each process could try to reconstruct values with different $H$.

\code{
  {{\tmem{{\tmstrong{ICC[$r$] protocol}}}}: code for process $i$}
  }{
  \begin{enumeratenumeric}
    \item Choose a random value $x_{i, j}$ for all $1 \leqslant j \leqslant n$
    and invoke IVSS[$r$]-$\mathcal{S}$ as a dealer for this value. Denote this
    execution by IVSS[$r$]-$\mathcal{S}$($x_{i, j}$).
    
    \item Participate in IVSS[$r$]-$\mathcal{S}$($x_{j, k}$) for every $j, k
    \in \{ 1, \ldots, n \}$.
    
    \item Define a set $\mathcal{T}_i$. Add process $j$ to $\mathcal{T}_i$ if
    all IVSS[r]-$\mathcal{S}$($x_{j, l}$) have been completed for all $1
    \leqslant l \leqslant n$. Wait until $| \mathcal{T}_i | = t + 1$, then
    assign $T_i =\mathcal{T}_i$ and A-Cast ``{\tmsamp{attach}} $T_i$
    {\tmsamp{to}} $i$''. (we say that the secrets $\{ x_{j, i} | \nobracket j
    \in T_i \}$ are attached to process $i$.)
    
    \item Define a set $\mathcal{A}_i$. Add process $j$ to $\mathcal{A}_i$ if
    the A-Cast ``{\tmsamp{attach}} $T_j$ {\tmsamp{to}} $j$'' has been
    completed and $T_j \subseteq \mathcal{T}_i$. Wait until $| \mathcal{A}_i | =
    n - t$, then assign $A_i =\mathcal{A}_i$ and A-cast ``$i$
    {\tmsamp{accepts}} $A_i$''.
    
    \item Define a set $\mathcal{S}_i$. Add process $j$ to
    $\mathcal{S}_i$ if ``$j$ {\tmsamp{accepts}} $A_j$'' is received from $j$
    and $A_j \subseteq \mathcal{A}_i$. Wait until $| \mathcal{S}_i | = n - t$,
    then A-Cast ``{\tmsamp{Reconstruct Enabled}}''. Let $S_i$ denote the current
    content of $\mathcal{S}_i$ and $H_i$ denote the
    current content of $\mathcal{A}_i$. Then A-Cast $(H_i, S_i)$.
    
    \item Participates in IVSS[$r$]-$\mathcal{R}$($x_{k, j}$) for every $k \in
    T_j$ and $j \in \mathcal{A}_i$. Let $y_{k, j}$ be the corresponding
    output.
    
    \item Let $u = \lceil 0.87 n \rceil$. Every process $j \in \mathcal{A}_i$
    is {\tmsamp{associated}} with a value, say $v_j$, which is computed as
    follows: $v_j = \left( \sum_{k \in T_j} y_{k, j} \right) \tmop{mod} u$.
    
    \item Wait until receiving $(\widetilde{H_j}, \widetilde{S_j})$ from $j$
    with $\widetilde{H_j} \subseteq \mathcal{A}_i$ and $\widetilde{S_j} \subseteq \mathcal{S}_i$ and the
    values {\tmsamp{associated}} with all processes in
    $\widetilde{H_j}$ are computed. Now if there exists a process $k \in
    \widetilde{H_j}$ such that $v_k = 0$, then output $0$. Otherwise output
    $1$.
  \end{enumeratenumeric}
}

We now state and prove the following lemmas which are slight variants of
lemmas 5.28-5.31 presented in {\cite{canetti1996studies}}.
%Since we replace the AVSS in {\cite{canetti1996studies}} with our IVSS[$r$],
%we also recall the corresponding proofs.

\begin{lemma}
  If some correct process completes ICC[$r$], then every other correct
  process that keeps participating in ICC[$r$] eventually completes.
\end{lemma}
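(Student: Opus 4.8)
The plan is to prove a "termination amplification" property for ICC[$r$], analogous to the standard second termination property of the common coin protocol of Canetti and Rabin. The statement I must establish is that if one correct process finishes ICC[$r$], every correct process that keeps running eventually finishes too. Let me trace which messages a completing process relies on.

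Let me look at the protocol steps. The final completion happens at Step 8, when a process receives $(\widetilde{H_j},\widetilde{S_j})$ from some $j$ with the required containments, and has computed the values associated with all processes in $\widetilde{H_j}$. Computing such an associated value (Step 7) requires reconstructing, for each $k\in T_j$, the secret $x_{k,j}$ via IVSS[$r$]-$\mathcal{R}$. So the core obstacle is: a process that completes has successfully run certain reconstruction instances, but another correct process must also be able to run those same instances. This is exactly why Step 5 forces every process to A-Cast its pair $(H_i,S_i)$ and announce "Reconstruct Enabled" — so that reconstruction targets are publicly known and every correct process participates in the same IVSS[$r$]-$\mathcal{R}$ invocations.

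So here is how I would carry out the argument. First I would note that if a correct process $p$ completes ICC[$r$], then $p$ passed through Steps 3–5, which rest entirely on A-Cast primitives and on the IVSS[$r$]-$\mathcal{S}$ termination properties. By the second termination property of A-Cast ("if some correct process completes, every correct process eventually completes") and Termination (2) of IVSS, every correct process that keeps participating eventually receives all the "attach", "accepts", and "Reconstruct Enabled" messages $p$ received, and eventually completes the relevant sharing protocols. Hence every correct process eventually reaches Step 5 and A-Casts its own pair and "Reconstruct Enabled". Second, I would show that the reconstruction instances needed in Steps 6–7 all terminate for every correct process. Here the key is Termination (3) of IVSS: an instance of IVSS[$r$]-$\mathcal{R}$ terminates provided some correct process completed the corresponding sharing and \emph{all} correct processes begin and keep participating in that reconstruction. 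Since every correct process eventually learns (via A-Cast) which reconstruction instances are enabled — namely those indexed by $k\in T_j$, $j\in\mathcal{A}$ for the publicly announced sets — all correct processes invoke the same set of reconstructions, so Termination (3) applies and each $y_{k,j}$ is eventually obtained by every correct process.

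Finally I would close the argument at Step 8. Since $p$ completed, $p$ received some $(\widetilde{H_j},\widetilde{S_j})$ from a process $j$ with $\widetilde{H_j}\subseteq\mathcal{A}_p$ and $\widetilde{S_j}\subseteq\mathcal{S}_p$, and $p$ computed the associated values of all processes in $\widetilde{H_j}$. That same $(\widetilde{H_j},\widetilde{S_j})$ was sent by A-Cast, so every correct process eventually receives it; by the containment-monotonicity as correct processes' sets $\mathcal{A}$ and $\mathcal{S}$ only grow, the containment conditions eventually hold for every correct process as well; and by the reconstruction termination just established, every correct process eventually computes the associated values of all members of $\widetilde{H_j}$. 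Hence the Step 8 completion condition is eventually satisfied for every correct process, so it completes ICC[$r$]. The main obstacle, and the crux of the proof, is the second step: guaranteeing that the reconstruction instances a completing process relied on are exactly the ones every correct process also participates in, so that the "all correct processes participate" hypothesis of Termination (3) of IVSS is met — this is precisely what the public A-Casting of $(H_i,S_i)$ in Step 5 is designed to ensure, and I would make that dependency explicit.
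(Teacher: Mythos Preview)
Your overall architecture is right and matches the paper: propagate A-Cast messages, use monotonicity of $\mathcal{T}_i$, $\mathcal{A}_i$, $\mathcal{S}_i$ to make the Step~8 containment conditions eventually hold for every persisting correct process, and then argue that the relevant reconstruction outputs $y_{k,j}$ are eventually available. But there is one technical misstep.

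You rest the reconstruction step on Termination~(3) of IVSS, which requires that \emph{all} correct processes begin and keep participating in the given IVSS[$r$]-$\mathcal{R}$ instance. That hypothesis need not hold here: the lemma only posits that \emph{one} correct process $p$ completed ICC[$r$] and that \emph{some} correct process $q$ keeps participating; other correct processes may have already completed ICC[$r$] with respect to a different $(\widetilde{H_{j'}},\widetilde{S_{j'}})$ and stopped, without ever invoking the particular IVSS[$r$]-$\mathcal{R}(x_{k,\ell})$ instances (for $\ell\in\widetilde{H_j}$) that $p$ used. So the ``all correct processes participate'' premise of Termination~(3) can fail. The paper instead invokes Termination~(4) of IVSS: since $p$ actually completed each such IVSS[$r$]-$\mathcal{R}(x_{k,\ell})$, any correct process that keeps participating in that instance eventually completes it too. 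That is exactly the guarantee needed for $q$. Swap your appeal to Termination~(3) for Termination~(4) and your argument goes through; the rest of your write-up (A-Cast propagation, growth of $\mathcal{A}_q$ and $\mathcal{S}_q$, and the Step~8 check) is fine and in fact spells out details the paper leaves implicit.
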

\begin{proof}
  If a correct process $i$ completes ICC[$r$] with respect to $(\widetilde{H_j}, \widetilde{S_j})$,
  then, since all messages are sent by A-Casting, every
  correct process that keeps participating in ICC[$r$] will receive at least
  $t + 1$ $(\widetilde{H_j}, \widetilde{S_j})$ as well. By the termination property (4) of IVSS[$r$],
  every correct process that keeps participating will also compute the values {\tmsamp{associated}} with
  all the processes in $\widetilde{H_j}$ and then complete the protocol.
\end{proof}

\begin{lemma}
  \label{icc-termination}If all correct processes keep participating in
  ICC[$r$], then all correct processes complete ICC[$r$] in constant time.
\end{lemma}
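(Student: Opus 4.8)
The plan is to follow the structure of Lemma~5.31 of \cite{canetti1996studies}, walking through the eight steps of ICC[$r$] in order and showing that each correct process clears every wait-statement after only a constant number of communication rounds. The single engine driving the whole argument is the set $C$ of correct processes, of which there are at least $n-t \geqslant 2t+1$. I will repeatedly use that, because every correct dealer's sharing completes at every correct participant (Termination~(1) of IVSS[$r$]) and because a sharing completed by one correct process completes at all correct participants that keep participating (Termination~(2)), every correct process is eventually absorbed into the relevant set of every other correct process, so the thresholds $t+1$ and $n-t$ are always reached.

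First I would handle the sharing steps. Since all correct processes keep participating, for each correct dealer $j$ all $n$ instances IVSS[$r$]-$\mathcal{S}$($x_{j,l}$) complete at every correct $i$ in constant time, so every correct $j$ enters $\mathcal{T}_i$; as $|C|\geqslant t+1$ this forces $|\mathcal{T}_i|\geqslant t+1$, and the A-Cast ``\textsf{attach} $T_i$ \textsf{to} $i$'' is issued and, by A-Cast Termination, delivered to all correct processes in constant time. For the acceptance step I would observe that if $p\in T_j$ for a correct $j$ then $p$'s sharings completed at $j$, hence by Termination~(2) they complete at $i$, so $p\in\mathcal{T}_i$ eventually; thus $T_j\subseteq\mathcal{T}_i$ and every correct $j$ enters $\mathcal{A}_i$, giving $|\mathcal{A}_i|=n-t$ and the A-Cast ``$i$ \textsf{accepts} $A_i$''. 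The same inclusion argument, applied one level up, shows $A_j\subseteq\mathcal{A}_i$ for correct $j$, so $|\mathcal{S}_i|=n-t$ is reached and $(H_i,S_i)$ together with ``\textsf{Reconstruct Enabled}'' are A-Cast. Each of these three steps costs only a constant number of rounds.

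The delicate step, and the one I expect to be the main obstacle, is the reconstruction phase: to invoke Termination~(3) of IVSS[$r$] for an instance IVSS[$r$]-$\mathcal{R}$($x_{k,j}$) I must guarantee that \emph{every} correct process begins and keeps participating in that very instance, since a reconstruction invoked by only a strict subset of correct processes carries no termination guarantee. The key is a propagation observation: if $j$ appears in $\mathcal{A}_i$ of some correct $i$, then ``\textsf{attach} $T_j$ \textsf{to} $j$'' completed at $i$ and $T_j\subseteq\mathcal{T}_i$, so by A-Cast Termination that attach completes at every correct $i'$, and by Termination~(2) every member of $T_j$ enters $\mathcal{T}_{i'}$; hence $j$ eventually enters $\mathcal{A}_{i'}$ for \emph{all} correct $i'$. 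Because $T_j$ is delivered by A-Cast, every correct process therefore begins IVSS[$r$]-$\mathcal{R}$($x_{k,j}$) for each $k\in T_j$, and since $j\in\mathcal{A}_i$ forces $k\in T_j\subseteq\mathcal{T}_i$ — i.e.\ the sharing IVSS[$r$]-$\mathcal{S}$($x_{k,j}$) has already completed at a correct process — Termination~(3) now applies and each such reconstruction terminates at every correct process in constant time. Consequently each associated value $v_j=\big(\sum_{k\in T_j} y_{k,j}\big)\bmod u$ with $j$ in a correct $\mathcal{A}$ is computed.

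Finally, for the output step I would take any correct process $j$: its A-Cast pair $(H_j,S_j)$ is delivered to every correct $i$, and by the propagation argument $H_j\subseteq\mathcal{A}_i$ and $S_j\subseteq\mathcal{S}_i$ eventually hold while every value associated with a process in $H_j$ has been computed; thus each correct $i$ finds a qualifying $(\widetilde{H_j},\widetilde{S_j})$, outputs $0$ or $1$, and completes ICC[$r$]. Since every wait-statement is cleared after a constant number of A-Cast invocations and constant-round IVSS[$r$] sharings and reconstructions, the total number of rounds is bounded by an absolute constant, which is exactly the claim.
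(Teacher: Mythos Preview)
Your argument tracks the paper's proof closely: you use Termination~(1) and~(2) of IVSS[$r$] to show the sets $\mathcal{T}_i$, $\mathcal{A}_i$, $\mathcal{S}_i$ fill up to their thresholds, then use the propagation observation (if $j\in\mathcal{A}_i$ for one correct $i$ then eventually $j\in\mathcal{A}_{i'}$ for every correct $i'$) to guarantee that \emph{all} correct processes invoke the same reconstruction instances so that Termination~(3) applies. The paper does the completion part by contradiction and appeals to the preceding lemma, while you argue directly via a witnessing correct $(H_j,S_j)$; this is a cosmetic difference only.

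There is, however, one point you gloss over that the paper treats explicitly: the \emph{constant-time} part of the claim. You assert that IVSS[$r$] sharings are ``constant-round,'' but the IVSS termination properties only promise eventual completion. Completion of IVSS[$r$]-$\mathcal{S}$ at Step~\ref{mr-ivss-s-complete} requires receiving ``\textsf{checked}$_r: p,q,\{i,j\}$'' messages, and these are produced by the certification protocol only after $p$ has processed $\bigcup_{r'<r}\tmop{CoreInvocations}^q_{r'}$. Since this history grows with $r$, one must argue why the checks do not cost $\Omega(r)$ rounds. The paper's point is that a correct process has already processed all rounds $<r-1$ while executing earlier rounds, so in round $r$ only the invocations of round $r-1$ are new work; hence the ``\textsf{checked}$_r$'' messages for correct $i,j,k,l$ appear within a constant number of additional rounds. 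Without this observation your final sentence (``a constant number of A-Cast invocations and constant-round IVSS[$r$] sharings'') is not yet justified.
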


\begin{proof}
  First we show that every correct process will A-Cast
  ``{\tmsamp{Reconstruct Enabled}}''. By termination property (1) of our
  IVSS[$r$] protocol, every correct process eventually completes
  IVSS-$\mathcal{S}$($x_{j, k}$) for every $k \in \{ 1, \ldots, n \}$ and correct $j$. Since
  there are at least $n - t$ correct processes, for each correct process
  $i$, $\mathcal{T}_i$ will eventually contain at least $t + 1$ (actually $n -
  t$) processes and thus $i$ will eventually A-Cast ``{\tmsamp{attach}} $T_i$
  {\tmsamp{to}} $i$''. So eventually, correct process $i$ will receive
  ``{\tmsamp{attach}} $T_j$ {\tmsamp{to}} $j$'' from every correct process
  $j$. Now since every process $k$ that is included in $\mathcal{T}_j$ will be
  eventually included in $\mathcal{T}_i$ (by termination property (2) of
  IVSS[$r$]), $T_j \subseteq \mathcal{T}_i$ will eventually hold. Therefore,
  every correct process $j$ will eventually be included in $\mathcal{A}_i$.
  Thus for every correct process $i$, $\mathcal{A}_i$ will eventually be of
  size $n - t$ and hence $i$ will A-Cast ``$i$ {\tmsamp{accepts}} $A_i$''.
  Following the same argument, $\mathcal{S}_i$ will be of size $n - t$ and
  hence $i$ will A-Cast ``{\tmsamp{Reconstruct Enabled}}'' and A-Cast $(H_i, S_i)$.
  
  We now show that all correct processes will complete ICC[$r$].
  By the lemma above, we only need to show that at least
  one of the correct processes will complete ICC[$r$]. Suppose by contradiction that no correct
  process will complete ICC[$r$]. Let $i$ be a correct process. If $i$
  receives ``{\tmsamp{attach}} $T_j$ {\tmsamp{to}} $j$'' from $j$ and includes
  $j$ in $\mathcal{A}_i$, then eventually every other correct process will
  do the same. Hence if $i$ invokes IVSS[$r$]-$\mathcal{R}$($x_{k, j}$) for $k \in
  T_j$ and $j \in \mathcal{A}_i$, then
  eventually every other correct process will also invoke
  IVSS[$r$]-$\mathcal{R}$($x_{k, j}$). By termination property (3) of
  IVSS[$r$], all correct processes will complete this
  IVSS[$r$]-$\mathcal{R}$($x_{k, j}$). Therefore,
  the values {\tmsamp{associated}} with all processes in
  $H_i$ will be computed.
  So $i$ will complete the protocol, in contradiction with the assumption that no correct
  process will complete.
  Therefore, all correct processes will complete.
  
  In the certification protocol, each process need to check all the past invocations in
  past rounds. However, since every correct process does this in every round, it is equal
  to that each process in every round checks all the invocations in the previous round. Therefore, all
  ``{\tmsamp{checked}}$_r \nocomma ; k, l, \{ i, j \}$'' could be finished in
  constant time for correct process $i, j, k, l$. Thus all invocations of IVSS[$r$]-$\mathcal{S}$ and
  IVSS[$r$]-$\mathcal{R}$ in ICC[$r$] complete in constant time. Since all
  A-Casts also complete in constant time, our ICC[$r$] protocol completes
  in constant time as well.
\end{proof}

\begin{lemma}
  \label{icc-associate}In ICC[$r$], once some correct process $j$ receives
  ``{\tmsamp{attach}} $T_i$ {\tmsamp{to}} $i$'' from the A-Cast of $i$, a
  unique value $v_i$ is fixed such that
  \begin{enumeratenumeric}
    \item Every correct process will {\tmsamp{associate}} $v_i$ with $i$ or
    a set of faulty pairs will eventually be inferred by correct processes.
    
    \item Value $v_i$ is distributed uniformly over $[0, \ldots, u - 1]$ and is
    independent of the values {\tmsamp{associated}} with the other processes.
  \end{enumeratenumeric}
\end{lemma}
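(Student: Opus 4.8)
The plan is to treat the two claims separately: I would derive both the \emph{fixing} of $v_i$ and claim (1) from the Correctness property of IVSS[$r$], and obtain claim (2) from its Secrecy property together with the size bound $|T_i| = t+1$. Throughout, the case of interest is an attach of $i$ that actually gets used, i.e. one that some correct process $j$ incorporates by adding $i$ to $\mathcal{A}_j$; this incorporation requires the check $T_i \subseteq \mathcal{T}_j$, which by the definition of $\mathcal{T}_j$ means $j$ has already completed $\text{IVSS}[r]\text{-}\mathcal{S}(x_{k,i})$ for every $k \in T_i$.

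First I would pin down $v_i$. For each $k \in T_i$, the sharing instance $\text{IVSS}[r]\text{-}\mathcal{S}(x_{k,i})$ has been completed by a correct process, so the Correctness property of IVSS[$r$] supplies a \emph{unique} value, call it $s_{k,i}$, attached to that instance. I would then set $v_i := \bigl(\sum_{k \in T_i} s_{k,i}\bigr) \bmod u$. This is fixed the moment the attach is received, since $T_i$ is common to all correct processes by A-Cast correctness and each $s_{k,i}$ is already determined. For claim (1) I would apply the Correctness property to each reconstruction $\text{IVSS}[r]\text{-}\mathcal{R}(x_{k,i})$ invoked in the reconstruction step: for every $k \in T_i$, either all correct processes output $s_{k,i}$, or a set of new faulty pairs is eventually inferred by the correct processes. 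If the former holds for all $k \in T_i$, then every correct process computes the same sum and hence associates exactly $v_i$ with $i$; if the latter holds for even one $k$, faulty pairs are inferred. This is precisely the stated dichotomy.

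The heart of the argument, and the step I expect to be the main obstacle, is claim (2). Since $|T_i| = t+1 > t$, the set $T_i$ contains at least one correct process $k_0$, whose secret $x_{k_0,i}$ was chosen uniformly at random over $\{0,\dots,u-1\}$. The attach is A-Cast strictly before any correct process invokes a reconstruction, so at the moment $v_i$ is fixed no correct process has yet entered $\text{IVSS}[r]\text{-}\mathcal{R}(x_{k_0,i})$; the Secrecy property of IVSS[$r$] then guarantees that the faulty processes have no information about $x_{k_0,i}$. Thus $x_{k_0,i}$ is uniform over $\mathbb{Z}_u$ and independent of the adversary's entire view, hence of the remaining summands $\{s_{k,i}\}_{k \neq k_0}$, so $v_i = (x_{k_0,i} + c)\bmod u$ is uniform over $[0,\dots,u-1]$ for the (adversarially determined but $x_{k_0,i}$-independent) constant $c$. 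For independence from the value associated with any other process $i'$, I would note that $v_{i'}$ is a function only of secrets carrying second index $i'$, so the hidden uniform summand $x_{k_0,i}$ does not enter $v_{i'}$; therefore $v_i$ remains uniform even conditioned on $v_{i'}$, indeed jointly on all $v_{i'}$ with $i' \neq i$, which is exactly the asserted independence. The delicate points to get right are that Secrecy is invoked at the correct position in the schedule (before reconstruction) and that a single correct summand genuinely randomizes the whole sum irrespective of how the adversary fixes the faulty dealers' secrets.
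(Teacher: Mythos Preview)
Your proposal is correct and follows essentially the same route as the paper: both arguments fix $v_i$ via the Correctness property of IVSS[$r$] applied to each sharing in $T_i$, derive the dichotomy in (1) directly from that property, and obtain (2) from Secrecy together with the fact that $|T_i|=t+1$ forces at least one correct dealer whose secret is hidden until $T_i$ is fixed. Your explicit restriction to attaches that are actually incorporated (so that $T_i\subseteq\mathcal{T}_j$ and hence some correct process has completed each relevant sharing) is a point the paper leaves implicit; it is exactly what is needed for IVSS Correctness to apply and does not change the argument.
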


\begin{proof}
  The correctness property of IVSS[$r$] ensures that for each $k \in T_i$ there
  is a fixed value $y_{k, i}$ such that all correct processes will output
  $y_{k, i}$ in IVSS[$r$]-$\mathcal{S}$($x_{k, i}$) or a set of faulty pairs
  will be inferred. Let $v_i = \left( \sum_{k \in T_i} y_{k, i} \right) \tmop{mod} u$,
  then every correct process will 
  {\tmsamp{associate}} $v_i$ with $i$ except that event $\mathbbm{E}$ occurs
  in some instances of IVSS[$r$], i.e., a set of faulty pairs will eventually
  be inferred by correct processes.
  
  It remains to show that $v_i$ is uniformly distributed over $[0, \ldots,
  u - 1]$, and is independent of the values associated with the other processes. A
  correct process starts reconstructing the secrets attached to process $i$
  only after it completes the ``{\tmsamp{attach}} $T_i$ {\tmsamp{to}} $i$''
  A-Cast. So the set $T_i$ is fixed before any correct process invokes
  IVSS[$r$]-$\mathcal{R}$($x_{k, i}$) for some process $k$. The secrecy
  property of IVSS[$r$] now ensures that, by the time the set $T_i$ is fixed,
  the adversary view of the invocations of IVSS[$r$]-$\mathcal{S}$($x_{k, i}$)
  where the dealers are correct is distributed independently of the shared
  values. Since $T_i$ contains at least one correct process and every
  correct process's shared secrets are uniformly distributed and mutually
  independent, the sum $v_i$ is uniformly and independently distributed over
  $[0, \ldots, u - 1]$. 
\end{proof}

\begin{lemma}
  \label{icc-m}Once a correct process A-Casts ``{\tmsamp{Reconstruct
  Enabled}}'', there is a set $M$ such that
  \begin{enumeratenumeric}
    \item For every process $j \in M$, some correct process has received
    ``{\tmsamp{attach}} $T_j$ {\tmsamp{to}} $j$'' from the A-Cast of $j$.
    
    \item If any correct process $k$ receives $(\widetilde{H_j}, \widetilde{S_j})$ from $j$
    with $\widetilde{H_j} \subseteq \mathcal{A}_k$ and $\widetilde{S_j} \subseteq \mathcal{S}_k$ and the
    values {\tmsamp{associated}} with all processes in
    $\widetilde{H_j}$ are computed, then $M \subseteq \widetilde{H_j}$.
    
    \item $| M | \geqslant \frac{n}{3}$.
  \end{enumeratenumeric}
\end{lemma}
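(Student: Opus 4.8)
The plan is to build a single \emph{core set} $M$ out of the very first correct process to A-Cast ``Reconstruct Enabled'', and then verify the three claims by a counting argument over the accept-sets, which are fixed once and for all by A-Casts and are therefore seen consistently by every correct process. Concretely, let $p^{*}$ be the first correct process to A-Cast ``Reconstruct Enabled'' and let $S^{*} = \mathcal{S}_{p^{*}}$ at that instant, so $|S^{*}| = n - t$. For each $s \in S^{*}$ the process $s$ has A-Cast ``$s$ accepts $A_s$'' with $A_s \subseteq \mathcal{A}_{p^{*}}$ and $|A_s| = n - t$; since these are A-Casts, the agreement property lets every correct process eventually agree on each $A_s$, so the following definition is unambiguous. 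For $p \in \{1, \ldots, n\}$ put $d(p) = |\{ s \in S^{*} : p \in A_s \}|$ and define $M = \{ p : d(p) \geqslant t + 1 \}$.

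For claim (1), note that $p \in M$ gives $d(p) \geqslant t + 1 > t$, so at least one of the indices $s \in S^{*}$ with $p \in A_s$ is a correct process; that correct $s$ placed $p$ into $\mathcal{A}_s$ only after completing the A-Cast ``attach $T_p$ to $p$'' (Step 4 of ICC[$r$]), which is exactly claim (1). For claim (2), I would use two facts about any certificate $(\widetilde{H_j}, \widetilde{S_j})$ that a correct $k$ accepts: it is well-formed, so $|\widetilde{S_j}| = n - t$, and each $s \in \widetilde{S_j}$ satisfies $A_s \subseteq \widetilde{H_j}$ (this mirrors the honest invariant $H_j = \mathcal{A}_j \supseteq A_s$ for $s \in \mathcal{S}_j$). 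Since $|\widetilde{S_j}| = |S^{*}| = n - t$, the intersection $\widetilde{S_j} \cap S^{*}$ has size at least $n - 2t$. A process $p \in M$ fails $p \in A_s$ for at most $(n - t) - (t + 1) = n - 2t - 1$ of the indices $s \in S^{*}$; as $n - 2t - 1 < n - 2t \leqslant |\widetilde{S_j} \cap S^{*}|$, there is some $s \in \widetilde{S_j} \cap S^{*}$ with $p \in A_s$, whence $p \in A_s \subseteq \widetilde{H_j}$. Therefore $M \subseteq \widetilde{H_j}$.

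For claim (3), I would double count the incidences. On one hand $\sum_p d(p) = \sum_{s \in S^{*}} |A_s| = (n - t)^2$, and on the other hand $d(p) \leqslant t$ off $M$ gives $\sum_p d(p) \leqslant |M|(n - t) + (n - |M|)\,t$. Rearranging these,
\[ |M| \geqslant \frac{n^2 - 3nt + t^2}{n - 2t} \geqslant \frac{n}{3}, \]
where the last inequality is equivalent to $2n^2 - 7nt + 3t^2 = (2n - t)(n - 3t) \geqslant 0$, which holds since $n > 3t$.

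The step I expect to be the main obstacle is the well-formedness invariant invoked in claim (2): that a correct process, when it accepts a certificate $(\widetilde{H_j}, \widetilde{S_j})$ A-Cast by a possibly faulty $j$, is guaranteed $A_s \subseteq \widetilde{H_j}$ for every $s \in \widetilde{S_j}$, and that every accepted accept-message carries a set of size exactly $n - t$ (so the count $\sum_p d(p) = (n - t)^2$ is exact and the threshold $t + 1$ is the right one). These are precisely the places where the adversary controlling $j$ (or some $s$) could try to cheat, so the argument hinges on pinning down what a correct process must verify before treating a reconstruct-enabled certificate as valid; once that verification is made explicit, the three claims follow from the intersection and double-counting estimates above.
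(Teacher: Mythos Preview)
Your approach is essentially identical to the paper's: the same definition of $M$ (via the first correct process $p^*$ to A-Cast ``Reconstruct Enabled'' and the threshold $d(p)\geqslant t+1$ over $S^*=S_{p^*}$), the same pigeonhole for part~(2), and the same double-counting for part~(3), arriving at the same bound $|M|\geqslant\frac{(n-t)^2-nt}{n-2t}\geqslant\frac{n}{3}$. The well-formedness concern you flag---that $|\widetilde{S_j}|=|A_s|=n-t$ and that $A_s\subseteq\widetilde{H_j}$ for $s\in\widetilde{S_j}$---is treated the same way in the paper (asserted rather than derived from an explicit protocol check), so you have matched the paper's argument and its level of rigor on that point.
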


\begin{proof}
  Let $i$ be the first correct process to A-Cast ``{\tmsamp{Reconstruct
  Enabled}}''. Let $M$ be the set of processes, $k$, for which $k \in A_l$ for
  at least $t + 1$ processes $l \in S_i$. We now show that all processes
  in $M$ satisfy the properties of the lemma.
  
  It is clear that $M \subseteq H_i$. Thus process $i$ has received
  ``{\tmsamp{attach}} $T_j$ {\tmsamp{to}} $j$'' for every $j \in M$. Since $i$
  is assumed to be correct, the first part of the lemma is proved.
  
  We now prove the second part. First $\widetilde{S_j}$ contains $n - t
  \geqslant 2 t + 1$ processes. Now if $k' \in M$ then $k'$ belongs to $A_l$ for
  at least $t + 1$ processes $l \in S_i$. This ensures that there is
  at least one process $l$ which belongs to $\widetilde{S_j}$ as well as
  $S_i$. Now $l \in \widetilde{S_j}$ implies that $j$ has ensured that
  $A_l \subseteq \widetilde{H_j}$. Consequently, $k' \in \widetilde{H_j}$.
  
  It remains to show that $| M | \geqslant \frac{n}{3}$. We use a counting argument
  for this purpose. Let $h = | H_i |$. We have $h \geqslant n - t$.
  Consider the $h \times n$ table $\Lambda$ (relative to process $i$), where
  $\Lambda_{l, k} = \tmop{one}$ iff $i$ has received ``$l$ {\tmsamp{accepts}}
  $A_l$'' from $l$ before A-Casting ``{\tmsamp{Reconstruct Enabled}}'' and $k
  \in A_l$. Then $M$ is the set of processes $k$ such that the $k$th column in
  $\Lambda$ has at least $t + 1$ one entries. There are $n - t$ one entries in
  each row of $\Lambda$; thus there are $h (n - t)$ one entries in $\Lambda$.
  
  Let $m$ denote the minimum number of columns in $\Lambda$ that contain at
  least $t + 1$ one entries. We show that $m \geqslant \frac{n}{3}$.
  Clearly, the worst distribution of one entries in $\Lambda$ is letting
  $m$ columns be all one entries and letting each of the remaining $n
  - m$ columns have $t$ one entries. This distribution requires the
  number of one entries to be no more than $m h \noplus + (n - m) t$. Thus, we
  must have:
  \[ m h + (n - m) t \geqslant h (n - t) . \]
  This gives $m \geqslant \frac{h (n - t) - \tmop{n t}}{h - t}$. Since $h
  \geqslant n - t$ and $n \geqslant 3 t + 1$, we have
  \[ m \geqslant \frac{(n - t)^2 - n t}{n - 2 t} = n - 2 t + \frac{n t - 3
     t^2}{n - 2 t} \geqslant n - 2 t \geqslant \frac{n}{3} . \]
  This shows that $| M | \geqslant \frac{n}{3}$.
  
  \ 
\end{proof}

\begin{lemma}
  \label{icc-correctness}For every invocation of ICC[$r$], either
  \begin{itemize}
    \item For each $v \in \{ 0, 1 \}$, with probability at least $1 / 4$,
    all correct processes output $v$; or
    
    \item A set of faulty pairs will eventually be inferred by correct
    processes.
  \end{itemize}
\end{lemma}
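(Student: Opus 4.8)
The plan is to split on whether the event $\mathbbm{E}$ occurs in any of the IVSS[$r$] instances invoked during this execution of ICC[$r$]. If $\mathbbm{E}$ occurs in some instance, then by Lemma \ref{lemma-infer} a set of faulty pairs is eventually inferred by every correct process, which immediately gives the second alternative of the statement. So the substance of the proof lies in the complementary case, in which $\mathbbm{E}$ never occurs; here I will establish the first alternative, namely that for each $v \in \{0,1\}$ all correct processes output $v$ with probability at least $1/4$.

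Assume $\mathbbm{E}$ does not occur. Then by Lemma \ref{icc-associate} every correct process associates the same value $v_j$ with each process $j$, and by the second part of that lemma these associated values are uniformly distributed over $\{0, \ldots, u-1\}$ and mutually independent. I would then invoke Lemma \ref{icc-m} to fix a set $M$ with $|M| \geqslant n/3$ that is contained in every set $\widetilde{H_j}$ a correct process could use in Step 8 to produce its output. This set $M$ will be the anchor for forcing agreement on the value $0$, while agreement on the value $1$ will instead be controlled by the full collection of processes carrying an associated value.

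For the value $0$: I would observe that if some $k \in M$ satisfies $v_k = 0$, then since $M \subseteq \widetilde{H_j}$ for the $\widetilde{H_j}$ used by each correct process, and since the value associated with every member of $\widetilde{H_j}$ is computed (a precondition of Step 8), every correct process finds such a $k$ and outputs $0$. Because the associated values are independent and uniform over a set of size $u$, the probability that at least one of the $|M| \geqslant n/3$ values in $M$ equals $0$ is at least $1 - (1 - 1/u)^{n/3}$. For the value $1$: all correct processes output $1$ provided that no process having an associated value (there are at most $n$ of them, since the used sets $\widetilde{H_j}$ lie in $\{1,\ldots,n\}$) takes the value $0$; this happens with probability at least $(1 - 1/u)^{n}$. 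It then remains to check that the choice $u = \lceil 0.87 n \rceil$ makes both lower bounds exceed $1/4$, i.e. that $(1-1/u)^n \geqslant 1/4$ and $(1-1/u)^{n/3} \leqslant 3/4$ hold simultaneously.

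The main obstacle is precisely this last quantitative step: the two requirements pull $u$ in opposite directions, since a larger $u$ favors outputting $1$ while a smaller $u$ favors outputting $0$, and $0.87 n$ is the balancing point for which both exponential bounds clear $1/4$. Writing $-n \ln(1 - 1/u) \approx n/u \approx 1.15$ turns the two constraints into roughly $0.86 \leqslant n/u \leqslant 1.39$, which $u = \lceil 0.87 n\rceil$ satisfies; I would carry out this estimate carefully, paying particular attention to small values of $n$, where the first-order approximation of the logarithm must be replaced by an exact bound.
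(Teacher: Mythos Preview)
Your proposal is correct and follows essentially the same route as the paper: split on whether $\mathbbm{E}$ occurs, and in the good case use Lemma~\ref{icc-associate} for the uniform independent distribution of the $v_i$, Lemma~\ref{icc-m} for the common core $M\subseteq\widetilde{H_j}$ driving the output-$0$ bound, and the global at-most-$n$ associated values for the output-$1$ bound, with the numerical check via $u=\lceil 0.87n\rceil$ yielding $(1-1/u)^n\geqslant e^{-1.15}\geqslant 0.25$ and $1-(1-1/u)^{n/3}\geqslant 1-e^{-0.29}\geqslant 0.25$. The paper handles the small-$n$ issue you flag simply by noting $n\geqslant 4$ (from $n>3t$, $t\geqslant 1$) before applying the exponential bounds.
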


\begin{proof}
  If $\mathbbm{E}$ occurs in any instance of IVSS[$r$] while
  executing ICC[$r$], \ then a set of faulty pairs will be inferred by
  correct processes. We prove the first part of the lemma assuming $\mathbbm{E}$
  does not occur. Suppose correct process $j$ completes ICC[$r$] with respect
  to $(\widetilde{H_k}, \widetilde{S_k})$.
  Since $\mathbbm{E}$
  does not occur, by Lemma \ref{icc-associate}, for every process $i$
  in $\mathcal{A}_j$, there is a fixed value
  $v_i$ that is distributed uniformly and independently over $[0, \ldots, u -
  1]$. Now we consider two cases:
  \begin{itemize}
    \item Let $M$ be the set of
    processes discussed in the lemma above. Clearly if $v_i = 0$ for some
    $i \in M$, then all correct processes associate $0$ with $j$ and output
    $0$. The probability that at least one process $i \in M$ has $v_i =
    0$ is $1 - \left( 1 - \frac{1}{u} \right)^{| M |}$. Since $u = \lceil 0.87
    n \rceil$, $n \geqslant 4$, and $| M | \geqslant \frac{n}{3}$ by Lemma
    \ref{icc-m}, we have $1 - \left( 1 - \frac{1}{u} \right)^{| M |} \geqslant
    1 - e^{- 0.29} \geqslant 0.25$. This implies that all correct processes output
    $0$ with probability at least $1 / 4$.
    
    \item If no process
    $i$ has $v_i = 0$ (and all correct process associate $v_i$ with $i$),
    then all correct processes output $1$. The probability of this event
    is at least $\left( 1 - \frac{1}{u} \right)^n \geqslant e^{- 1.15}
    \geqslant 0.25$.
  \end{itemize}
\end{proof}

Hence we have the following theorem.

\begin{theorem}
  Protocol ICC[$r$] is a terminating, $t$-resilient inferable common coin
  protocol.
\end{theorem}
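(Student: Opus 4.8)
The plan is to assemble the three ICC properties directly from the lemmas established above in this appendix, since each was designed to discharge one clause of the ICC definition. First I would verify Termination (1): by Lemma \ref{icc-termination}, if all correct processes keep participating in ICC[$r$], then every correct process completes (indeed in constant time), which is exactly the first termination clause. Next, Termination (2) follows from the first lemma of this appendix: if a correct process completes ICC[$r$] with respect to some $(\widetilde{H_j}, \widetilde{S_j})$, then since all the governing messages are sent by A-Cast, every correct process that keeps participating eventually receives the same $t+1$ pairs $(\widetilde{H_j}, \widetilde{S_j})$ and, by termination property (4) of IVSS[$r$], computes the values associated with all processes in $\widetilde{H_j}$, and hence completes.

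For Correctness I would invoke Lemma \ref{icc-correctness}, which states precisely the required dichotomy for every invocation: either a set of faulty pairs is eventually inferred by correct processes (covering every case in which the event $\mathbbm{E}$ occurs in some instance of IVSS[$r$]), or $\mathbbm{E}$ never occurs and then each value $v \in \{ 0, 1 \}$ is output by all correct processes with probability at least $1/4$. The two probability bounds rest on the set $M$ of Lemma \ref{icc-m} with $| M | \geqslant \frac{n}{3}$ together with the uniform-and-independent distribution of the associated values $v_i$ guaranteed by Lemma \ref{icc-associate}: the chance that some associated value is zero is at least $1 - e^{-0.29} \geqslant 0.25$ (output $0$ case), while the chance that no associated value is zero is at least $e^{-1.15} \geqslant 0.25$ (output $1$ case).

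The one point worth checking is that these clauses fit together consistently under a single fixed $t$-adversary: the fault-inference alternative appearing in the ICC Correctness property is the same inference guaranteed by the IVSS Correctness property whenever $\mathbbm{E}$ occurs, so the ``either/or'' is well-defined for every fixed invocation and every applicable adversary. I expect no genuine obstacle here, since all the analytic work, namely the secrecy-based independence of the $v_i$, the counting bound $| M | \geqslant \frac{n}{3}$, and the exponential probability estimates, has already been discharged in Lemmas \ref{icc-associate}--\ref{icc-correctness}. The theorem is thus simply the conjunction of these results, and the proof reduces to citing them. Hence Protocol ICC[$r$] is a terminating, $t$-resilient inferable common coin protocol.
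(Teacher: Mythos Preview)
Your proposal is correct and follows essentially the same approach as the paper, which simply cites Lemma~\ref{icc-termination} for termination and Lemma~\ref{icc-correctness} for correctness. You are in fact slightly more careful than the paper: you explicitly separate Termination~(2) and attribute it to the first lemma of the appendix, whereas the paper's own proof lumps both termination clauses under a single reference to Lemma~\ref{icc-termination}; your additional recap of the probability estimates from Lemmas~\ref{icc-associate}--\ref{icc-m} is accurate but not required, since Lemma~\ref{icc-correctness} already encapsulates that work.
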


\begin{proof}
  The termination properties follow from Lemma
  \ref{icc-termination}. The correctness properties follow from
  Lemma \ref{icc-correctness}.
\end{proof}

%\newpage
\section{From Common Coin to Byzantine Agreement}\label{app-aba}

First we recall a voting protocol called \tmsamp{Vote} from
{\cite{canetti1996studies}} which is a primitive required for
the construction of our ABA protocol. Protocol \tmsamp{Vote}
computes whether a detectable majority for some value among the (binary) inputs of all processes.
The output of protocol \tmsamp{Vote} is a tuple with the following meanings.

\begin{itemize}
  \item For $\sigma \in \{1, 2\}$, output $(\sigma, 2)$ means that there is an
    overwhelming majority for $\sigma$.

  \item For $\sigma \in \{1, 2\}$, output $(\sigma, 1)$ means that there is a
    distinct majority for $\sigma$.

  \item $(\bot, 0)$ means that there is no distinct majority.
\end{itemize}

\code{
%\begin{tmpadded}
  {{\tmem{{\tmstrong{Vote protocol}}}}: code for process $i$ with binary input $x_i$}
  }{
  \begin{enumeratenumeric}
    \item A-Cast ``$\text{{\tmsamp{input}}}, j, x_j$''.
    
    \item Define a set $\mathcal{A}_i$. Add $(j, x_j)$ to $\mathcal{A}_i$ if
    ``$\text{{\tmsamp{input}}}, j, x_j$'' is received from the A-Cast of process
    $j$.
    
    \item Wait until $| \mathcal{A}_i | = n - t$. Then assign $A_i
    =\mathcal{A}_i$. Set $a_i$ to the majority bit among$\{ x_j : (j, x_j) \in
    A_i \}$ and A-Cast ``$\text{\tmsamp{vote}}, i, A_i, a_i$''.
    
    \item Define a set $\mathcal{B}_i$. Add $(j, A_j, a_j)$ to $\mathcal{B}_i$
    if ``$\text{\tmsamp{vote}}, j, A_j, a_j$'' is received from the A-Cast of
    process $j$, $A_j \subset \mathcal{A}_i$, and $a_j$ is the majority bit of
    $A_j$.
    
    \item Wait until $| \mathcal{B}_i | = n - t$. Then assign $B_i
    =\mathcal{B}_i$. Set $b_i$ to the majority bit among $\{ a_j : (j, A_j,
    a_j) \in B_i \}$ and A-cast ``$\text{{\tmsamp{revote}}} \nocomma, i, B_i,
    b_i$''.
    
    \item Define a set $C_i$. Add $(j, B_j, b_j)$ to $C_i$ if
    ``$\text{{\tmsamp{revote}}}, j, B_j, b_j$'' is received from the A-cast of
    process $j$, $B_j \subset \mathcal{B}_i$, and $b_j$ is the majority bit of
    $B_j$.
    
    \item Wait until $| C_i | \geqslant n - t$. If all processes $j \in
    B_i$ had the same vote $a_j = \sigma$, then output $(\sigma, 2)$ and
    terminate. Otherwise, if all processes $j \in C_i$ have the same
    {\tmsamp{revote}} $b_j = \sigma$, then output $(\sigma, 1)$ and terminate.
    Otherwise, output $(\bot, 0)$ and complete the protocol.
  \end{enumeratenumeric}
  %{\hspace*{\fill}}}
%\end{tmpadded}
}

This voting protocol is identical to that of {\cite{canetti1996studies}}.
The readers may refer to lemmas
5.32-5.35 {\cite{canetti1996studies}} for complete proofs.

\begin{lemma}
  All correct processes complete the voting protocol in constant time.
\end{lemma}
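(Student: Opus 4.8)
The plan is to follow the protocol's three A-Cast phases --- \tmsamp{input}, \tmsamp{vote}, and \tmsamp{revote} --- in order, and to show that each ``wait until'' guard is eventually passed by every correct process, while only a constant number of A-Casts are ever involved (each completing in constant time by the termination property of A-Cast). First I would handle the \tmsamp{input} phase. There are at least $n - t$ correct processes, and each of them A-Casts its ``$\tmsamp{input}, j, x_j$'' message. By the termination property of A-Cast, every such A-Cast is completed by every correct process in constant time, so for each correct $i$ the set $\mathcal{A}_i$ eventually contains the $n - t$ inputs of the correct processes. Hence $|\mathcal{A}_i|$ reaches $n - t$, $i$ freezes $A_i$, computes its majority bit $a_i$, and A-Casts its \tmsamp{vote}, all within a constant number of periods.

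Next I would treat the \tmsamp{vote} phase. Fix correct processes $i$ and $j$. Since $j$ is correct, $a_j$ is genuinely the majority bit of $A_j$, so the majority-bit check passes at $i$. Moreover, every pair $(j', x_{j'}) \in A_j$ was placed there because $j$ completed the A-Cast of $j'$; by the correctness (agreement) property of A-Cast, $i$ completes the same A-Cast with the same value, so eventually $A_j \subseteq \mathcal{A}_i$. Therefore $(j, A_j, a_j)$ is eventually added to $\mathcal{B}_i$. Ranging over the $n - t$ correct $j$, the guard $|\mathcal{B}_i| = n - t$ is met, so $i$ freezes $B_i$, computes $b_i$, and A-Casts its \tmsamp{revote}. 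The \tmsamp{revote} phase is symmetric: for correct $j$, $b_j$ is the true majority bit of $B_j$ and, by A-Cast agreement, $B_j \subseteq \mathcal{B}_i$ eventually, so $(j, B_j, b_j)$ enters $C_i$ for all $n - t$ correct $j$ and the guard $|C_i| \geqslant n - t$ is satisfied. At that point $i$ evaluates the output conditions of the last step and terminates.

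The main obstacle I anticipate is not the termination of the individual A-Casts (which is immediate) but verifying that an honest process's \tmsamp{vote} (resp.\ \tmsamp{revote}) always clears the two acceptance checks at every other correct process: the subset containment $A_j \subseteq \mathcal{A}_i$ (resp.\ $B_j \subseteq \mathcal{B}_i$) and the majority-bit consistency. This is exactly where the agreement property of A-Cast and the honesty of the sender combine: honesty guarantees the reported majority bit is computed correctly, and agreement guarantees that every element justifying $j$'s frozen set is eventually reflected in $i$'s growing set. Once both checks are seen to pass, the counting is routine, since there are at least $n - t$ correct senders in each phase; and because the protocol uses only three A-Cast rounds, each completing in constant time, the whole \tmsamp{Vote} protocol terminates in constant time.
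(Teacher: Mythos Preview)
Your proposal is correct and gives the standard termination argument for Bracha-style layered voting: in each of the three phases, the $n-t$ honest senders' A-Casts complete everywhere in constant time, and for any two correct $i,j$ the acceptance checks (majority-bit correctness and $A_j\subseteq\mathcal{A}_i$, resp.\ $B_j\subseteq\mathcal{B}_i$) are eventually satisfied thanks to honesty of $j$ and the agreement/termination properties of A-Cast. The paper itself does not prove this lemma; it simply states that the voting protocol is identical to that of \cite{canetti1996studies} and refers the reader to Lemmas~5.32--5.35 there, so there is no in-paper proof to compare against --- your argument is precisely the expected one.
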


\begin{lemma}
  \label{vote-validaty}If all correct processes have input $\sigma$,
  then all correct processes output $(\sigma, 2)$.
\end{lemma}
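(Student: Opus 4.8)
The plan is to exploit that, under the hypothesis, there are at least $n-t$ correct processes all holding the same input $\sigma$, and to propagate this overwhelming majority through the three A-Cast rounds (\tmsamp{input}, \tmsamp{vote}, \tmsamp{revote}) of the protocol. Since the preceding lemma already guarantees that every correct process completes the protocol, I would focus entirely on showing that the decision taken in Step 7 is $(\sigma, 2)$ rather than one of the weaker outcomes.

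First I would pin down the vote $a_i$ of an arbitrary correct process $i$. Because process $i$ waits until $|\mathcal{A}_i| = n-t$ before voting, the set $A_i$ has exactly $n-t$ entries, of which at most $t$ originate from faulty senders; hence at least $n - 2t$ entries are \tmsamp{input} messages A-Cast by correct processes, each carrying the value $\sigma$ (by the correctness of A-Cast, a correct sender's value is received identically by everyone). The optimal-resilience assumption $n > 3t$ gives $n - 2t > (n-t)/2$, so $\sigma$ is the strict majority of $A_i$ and every correct process sets $a_i = \sigma$.

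Next I would show that every triple $(j, A_j, a_j)$ that a correct process $i$ admits into $B_i$ satisfies $a_j = \sigma$, even when $j$ is faulty. The admission rule forces $A_j \subseteq \mathcal{A}_i$, so each pair in $A_j$ was delivered to $i$ through a genuine A-Cast and thus carries the true broadcast value; in particular its correct-process entries all read $\sigma$. A well-formed vote-set carries the same input-set of size $n-t$ as the one produced in Step 3 (as in the protocol of \cite{canetti1996studies}), so the counting argument of the previous paragraph applies verbatim to $A_j$ and shows that $\sigma$ is its majority; since the admission rule also checks that $a_j$ equals that majority, we obtain $a_j = \sigma$. Therefore, when $i$ reaches Step 7 with $|C_i| \geq n-t$, the test ``all $j \in B_i$ had vote $a_j = \sigma$'' holds and $i$ outputs $(\sigma, 2)$.

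The main obstacle is controlling the votes originating from faulty processes in the middle step: a faulty $j$ may A-Cast an arbitrary pair $(A_j, a_j)$, so I must argue that the combination of the subset check $A_j \subseteq \mathcal{A}_i$ (which anchors the claimed entries to values that were actually broadcast) and the explicit majority check on $a_j$ leaves the adversary no room to inject a vote different from $\sigma$. Once this is established, the whole lemma collapses onto the single inequality $n - 2t > (n-t)/2$, which is precisely the assumption $n > 3t$.
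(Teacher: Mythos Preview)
The paper does not actually supply a proof of this lemma; it simply states that the voting protocol is identical to that of \cite{canetti1996studies} and refers the reader to Lemmas~5.32--5.35 there. Your argument is precisely the standard one from that source: at least $n-2t > (n-t)/2$ entries of any accepted $A_j$ come from correct processes and hence carry $\sigma$, so the majority check forces $a_j=\sigma$ for every $(j,A_j,a_j)\in B_i$ and Step~7 yields $(\sigma,2)$.

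One point worth making explicit: your second paragraph hinges on $|A_j|=n-t$ for every vote admitted into $\mathcal{B}_i$, even when $j$ is faulty. The protocol box in this paper does not state that size check, and without it a faulty $j$ could A-Cast a tiny $A_j$ populated only by faulty processes' inputs, defeating the majority argument. You handle this by appealing to the well-formedness of a vote-set ``as in the protocol of \cite{canetti1996studies}'', which is the right move since the paper declares the protocol identical to Canetti's; just be aware that this size constraint is doing real work and is not optional.
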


\begin{lemma}
  \label{vote-level2}If some correct process outputs $(\sigma, 2)$, then
  every correct process outputs either $(\sigma, 2)$ or $(\sigma, 1)$.
\end{lemma}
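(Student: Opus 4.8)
The plan is to trace how the value $\sigma$ propagates through the two voting rounds of \tmsamp{Vote}, leaning on two facts: each process is the sender of its own \tmsamp{vote} (and \tmsamp{revote}) A-Cast \emph{exactly once}, so votes are globally well-defined; and each set $A_i$, $B_i$ that justifies a (re)vote has the prescribed size $n-t$. First I would unwind the hypothesis. If a correct process $p$ outputs $(\sigma,2)$, then by the output rule in the last step of \tmsamp{Vote} every one of the $n-t$ processes in $B_p$ A-Cast the vote $a_j=\sigma$. Since these are $n-t$ distinct senders and a process casts at most one valid vote, at least $n-t$ processes are committed to $\sigma$; hence at most $t$ processes can ever hold a valid vote equal to the opposite value $\tau\neq\sigma$.

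Next I would show every correct process obtains revote $b_i=\sigma$. Any $B_i$ is a set of $n-t$ valid votes over distinct senders, so by the count above it contains at most $t$ entries equal to $\tau$ and therefore at least $n-2t$ entries equal to $\sigma$. Because $n>3t$ gives $n-2t>t$, the value $\sigma$ is the strict majority of $B_i$, so $b_i=\sigma$. Crucially, the same counting applies to \emph{any} admissible vote-set: a revote $(j,B_j,b_j)$ is valid only if $B_j\subseteq\mathcal{B}_i$ has size $n-t$ with $b_j$ the majority of $B_j$, and every such set is $\sigma$-dominated. Consequently no valid revote can carry the value $\tau$ at all, for correct or faulty senders alike.

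Finally I would read off the output. Consider any correct process $i$ that completes. If all members of $B_i$ cast the same vote, that common value must be $\sigma$, since $B_i$ contains at least $n-2t\geq 1$ votes for $\sigma$; thus $i$ outputs $(\sigma,2)$. Otherwise $i$ inspects $C_i$, which consists solely of valid revotes, each equal to $\sigma$ by the previous paragraph; hence all revotes in $C_i$ agree on $\sigma$ and $i$ outputs $(\sigma,1)$. The $(\bot,0)$ branch is therefore never reached, which is exactly Lemma~\ref{vote-level2}.

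The delicate step, and where I expect the real content to sit, is ruling out the $(\bot,0)$ outcome: a priori a faulty process might try to inject a revote for $\tau$ into some $C_i$. What blocks this is precisely the admissibility requirement that the underlying $B_j$ have size $n-t$ together with the $\leq t$ bound on $\tau$-voters, which forces every valid $B_j$ to have majority $\sigma$. I would take care to state this size condition explicitly (it is part of the validity check of \tmsamp{revote}) and to justify the one-A-Cast-per-sender uniqueness used to bound the $\tau$-voters; the majority counting itself is then routine given $n>3t$.
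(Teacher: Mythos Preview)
Your argument is correct and is the standard one; the paper itself does not supply a proof for this lemma but simply cites Canetti's thesis (lemmas 5.32--5.35 in \cite{canetti1996studies}), whose argument is exactly the counting you give. The one point you rightly flag as delicate---that a faulty process cannot slip a $\tau$-revote into some $C_i$---does hinge on the receiver checking $|B_j|=n-t$ as part of admitting a \tmsamp{revote}; this size check is implicit in the Canetti protocol (a correct process only A-Casts $B_j$ of size $n-t$, and receivers verify the message format), so your proof goes through as written.
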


\begin{lemma}
  \label{vote-level1}If some correct process outputs $(\sigma, 1)$, and no
  correct process outputs $(\sigma, 2)$, then every correct process
  outputs either $(\sigma, 1)$ or $(\bot, 0)$.
\end{lemma}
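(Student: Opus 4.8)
The plan is to show that once some correct process---call it $p$---outputs $(\sigma,1)$, no correct process can produce any output other than $(\sigma,1)$ or $(\bot,0)$. A correct process that completes the protocol outputs exactly one of a level-$2$ value, a level-$1$ value, or $(\bot,0)$, so it suffices to rule out three possibilities: $(\sigma,2)$, the level-$2$ output $(\sigma',2)$ for the other value $\sigma'\neq\sigma$, and the level-$1$ output $(\sigma',1)$. The first is excluded outright by the hypothesis that no correct process outputs $(\sigma,2)$.

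For the remaining level-$2$ case I would invoke Lemma~\ref{vote-level2}. Suppose some correct process outputs $(\sigma',2)$ with $\sigma'\neq\sigma$. Then Lemma~\ref{vote-level2} forces every correct process, and in particular $p$, to output either $(\sigma',2)$ or $(\sigma',1)$. But $p$ outputs $(\sigma,1)$, which is neither of these since $\sigma\neq\sigma'$, a contradiction. Hence no correct process outputs a level-$2$ value at all.

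The core of the argument is the level-$1$ case, which I would settle by a counting argument on the \emph{revote} quorums. Let $q$ be a correct process and suppose it outputs $(\sigma',1)$ with $\sigma'\neq\sigma$. By the output rule (the final step of the \tmsamp{Vote} protocol), $p$'s output $(\sigma,1)$ means $|C_p|\geqslant n-t$ and every tuple $(j,B_j,b_j)\in C_p$ has revote $b_j=\sigma$; likewise $|C_q|\geqslant n-t$ and every revote recorded in $C_q$ equals $\sigma'$. Each process $j$ A-Casts at most one revote, and by the correctness property of A-Cast both $p$ and $q$ receive the \emph{same} revote value from $j$. Therefore a single index $j$ cannot be recorded in $C_p$ with $b_j=\sigma$ and in $C_q$ with $b_j=\sigma'$, so $C_p$ and $C_q$ are disjoint as sets of process indices. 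This forces $|C_p|+|C_q|\leqslant n$, i.e. $2(n-t)\leqslant n$, which contradicts $n>3t$ (as $n>2t$ gives $2(n-t)>n$). Hence no correct process outputs $(\sigma',1)$.

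Combining the three exclusions, every correct process that completes the protocol must output $(\sigma,1)$ or $(\bot,0)$, as claimed. I expect the main obstacle to be the disjointness step: one must argue carefully that the same revote value is delivered to both correct observers $p$ and $q$. This is precisely where the correctness guarantee of A-Cast---every correct process that completes receives one common value---is essential, since it is what makes the two $(n-t)$-quorums certifying $\sigma$ and $\sigma'$ genuinely non-overlapping and thus incompatible with $n>3t$.
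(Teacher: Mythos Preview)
Your proof is correct. Note, however, that the paper does not actually supply its own proof of this lemma: it states that the voting protocol is identical to Canetti's and refers the reader to lemmas~5.32--5.35 of~\cite{canetti1996studies} for complete proofs. There is therefore nothing in the paper itself to compare against. Your argument---ruling out $(\sigma,2)$ by hypothesis, $(\sigma',2)$ via Lemma~\ref{vote-level2}, and $(\sigma',1)$ by the quorum-disjointness count $2(n-t)\leqslant n$ forced by A-Cast consistency---is the standard one and is exactly what the cited reference does.
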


Given the voting protocol and our ICC[$r$] protocol, we can design our ABA protocol
following {\cite{canetti1993fast}}.

\code{
%\begin{tmpadded}
  {{\tmstrong{{\tmem{ABA protocol}}}}: code for process $i$ with binary input $x_i$}
}{
  \begin{enumeratenumeric}
    \item Set $r = 0$ and $v_1 = x_i$. Start the \tmem{certification} protocol.
    
    \item Repeat until completing: (each iteration is consider as a round)
    \begin{enumeratealpha}
      \item Set $r = r + 1$. Set $(y_r, m_r) = \tmop{Vote} (v_r)$.
      
      \item Invoke ICC[$r$] and wait until completion. Let $c_r$ be the
      output of ICC[$r$].
      
      \item Consider the following cases:
      \begin{enumerateromancap}
        \item If $m_r = 2$, set $v_{r + 1} = y_r$ and A-Cast
        ``{\tmsamp{complete with }}$v_r$''. Participate in only one more
        instance of the voting protocol and only one more ICC[$r$] protocol.
        
        \item If $m_r = 1$, set $v_{r + 1} = y_r$.
        
        \item Otherwise, set $v_{r + 1} = c_r$.
      \end{enumerateromancap}
      \item Upon receiving $t + 1$ ``{\tmsamp{complete with }}$\sigma$''
      A-Casts for some value $\sigma$, output $\sigma$ and complete the
      protocol.
    \end{enumeratealpha}
  \end{enumeratenumeric}
  %{\hspace*{\fill}}}
%\end{tmpadded}
}

We now state and prove the following lemmas which are slight variants of
lemmas 5.36-5.39 presented in {\cite{canetti1996studies}}.

\begin{lemma}
  If all correct processes are in rounds greater than or equal to $r$, then every
  correct process eventually completes ICC[$r$].
\end{lemma}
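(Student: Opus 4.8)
The plan is to reduce the claim to the two termination properties of ICC[$r$] established earlier, after pinning down precisely how the hypothesis constrains participation in ICC[$r$]. The crucial observation is that, within a single round, a correct process executes step~(b) before step~(c): it invokes ICC[$r$] and blocks there until ICC[$r$] completes, and only afterwards does it evaluate $m_r$ and possibly impose the ``participate in only one more'' restriction. Hence any correct process that has reached round~$r$ (i.e.\ has begun step~(b) of round~$r$) participates in ICC[$r$] and keeps doing so until ICC[$r$] completes; and any correct process already in a round $r' > r$ must, in order to have left step~(b) of round~$r$, have already completed ICC[$r$]. Reading the hypothesis as saying that every correct process has begun round~$r$ or a later round, I conclude that each correct process has either already completed ICC[$r$] or is keeping participating in it.

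First I would treat the case in which every correct process is still in round exactly~$r$. Then every correct process keeps participating in ICC[$r$], so termination property~(1) of ICC[$r$] (Lemma~\ref{icc-termination}) yields that all of them complete ICC[$r$], in constant time. In the complementary case some correct process has advanced to a round $r' > r$ and therefore has already completed ICC[$r$]; termination property~(2) of ICC[$r$] then applies, and since every correct process still in round~$r$ keeps participating, all of them eventually complete ICC[$r$] as well. Combining the two cases gives that every correct process eventually completes ICC[$r$].

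The step I expect to demand the most care is showing that the early-termination clause (case~I, where $m_r = 2$) never causes a correct process in round $\geq r$ to withhold participation from ICC[$r$]. The resolution to make rigorous is exactly the ordering of steps noted above: the ``participate in only one more instance'' instruction is issued in step~(c), after ICC[$r$] has already completed in step~(b), so it can only curtail participation in ICC[$r+1$] and later rounds, never in ICC[$r$] itself. I would also make explicit that a correct process which exhausted its ``one more'' in a round strictly before~$r$ has not begun round~$r$ and is therefore not among the processes the hypothesis places in round $\geq r$; this is what keeps the set of ICC[$r$] participants large enough for termination properties~(1) and~(2) to deliver completion of ICC[$r$] for every correct process in round $\geq r$.
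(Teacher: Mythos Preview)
Your proposal is correct and follows essentially the same approach as the paper: the same two-case split (some correct process already past round $r$ versus all correct processes still in round $r$), invoking termination property~(2) of ICC[$r$] in the first case and termination property~(1) in the second. Your treatment is more detailed than the paper's terse proof, in particular your explicit handling of the ``participate in only one more'' clause; the paper simply asserts that all correct processes keep participating without spelling out why.
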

\begin{proof}
  If some correct process is in a round greater than $r$, then it must have completed
  ICC[$r$]. Then by termination property (2) of ICC[$r$], every correct process
  eventually completes ICC[$r$].

  If all correct processes are in round $r$, Suppose that no correct
  process will complete ICC[$r$]. Since no correct process completes ICC[$r$],
  all correct processes keep participating. Then by termination property (1) of
  ICC[$r$], every correct process eventually completes. This is a contradiction.

  Therefore, the lemma is proved.
\end{proof}

\begin{lemma}
  In our ABA protocol, if all correct processes have the same input $\sigma$, then all
  correct processes complete and output $\sigma$.
\end{lemma}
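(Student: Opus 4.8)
The plan is to lean on the validity property of the voting protocol, Lemma~\ref{vote-validaty}, together with the $t+1$-threshold in the final A-Cast of the ABA protocol. First I would observe that, since all correct processes share the input $\sigma$, every correct process sets $v_1 = \sigma$ and therefore invokes $\tmop{Vote}(\sigma)$ in round $1$. By Lemma~\ref{vote-validaty}, each correct process obtains $(y_1, m_1) = (\sigma, 2)$. Consequently every correct process enters case~I of Step~2c, setting $v_2 = y_1 = \sigma$ and A-Casting ``\tmsamp{complete with }$v_1$'', i.e. ``\tmsamp{complete with }$\sigma$''.

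Next I would establish termination and the value of the output. There are at least $n - t \geqslant t+1$ correct processes, each of which A-Casts ``\tmsamp{complete with }$\sigma$''. By the termination property of A-Cast, every correct process eventually receives these $t+1$ A-Casts, so Step~2d fires and each correct process outputs $\sigma$ and completes. To rule out a conflicting output, I would argue that no value $\sigma' \neq \sigma$ can reach the threshold of $t+1$ ``\tmsamp{complete with }$\sigma'$'' A-Casts. The key point is that a correct process A-Casts ``\tmsamp{complete with }$v_r$'' only in a round where its vote returns $m_r = 2$, and in every such round its vote value equals $\sigma$: in round $1$ this is Lemma~\ref{vote-validaty}, and in the single additional ``helping'' round permitted after $m_r = 2$, Lemma~\ref{vote-level2} (together with validity applied to the unchanged common input) forces the vote to remain $(\sigma, 2)$. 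Hence no correct process ever emits ``\tmsamp{complete with }$\sigma'$'' for $\sigma' \neq \sigma$, and since the adversary controls at most $t$ processes, $\sigma'$ can collect at most $t$ such A-Casts, below the threshold.

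The only delicate point, and the one I would treat most carefully, is precisely this last claim: after a correct process obtains $m_r = 2$ it does not immediately stop but participates in ``only one more'' voting and ICC instance, so I must verify that in that follow-up round the vote value it would A-Cast is still $\sigma$ and never the opposite bit. This reduces to checking that the common value $v_{r+1} = \sigma$ is preserved and re-confirmed by validity, after which the counting argument against $\sigma'$ is immediate. Everything else is a direct appeal to Vote validity and to the A-Cast termination and threshold properties, so no further machinery is needed.
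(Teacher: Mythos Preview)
Your proposal is correct and follows essentially the same route as the paper: invoke Lemma~\ref{vote-validaty} to obtain $(y_1,m_1)=(\sigma,2)$ for every correct process in round~$1$, so each A-Casts ``\tmsamp{complete with }$\sigma$'', giving at least $n-t\geqslant t+1$ such A-Casts versus at most $t$ for any $\sigma'\neq\sigma$, and hence every correct process outputs $\sigma$. Your extra care about the additional ``helping'' round is sound but goes beyond what the paper's (terser) proof spells out.
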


\begin{proof}
  If all correct processes have the same input $\sigma$, then by Lemma \ref{vote-validaty} every
  correct process will output $(y_1, m_1) = (\sigma, 2)$ by the end of Step
  a. Therefore, every correct process A-Casts ``{\tmsamp{complete with
  }}$\sigma$'' in the first iteration. Therefore, every correct process will
  receive at least $n - t$ ``{\tmsamp{complete with }}$\sigma$'' A-Casts, and
  at most $t$ ``{\tmsamp{complete with }}$\sigma'$'' A-Casts. Consequently,
  every correct process will output $\sigma$.
\end{proof}

\begin{lemma}
  \label{aba-validaty}In our ABA protocol, if a correct process completes with
  output $\sigma$, then all correct processes will complete with output
  $\sigma$.
\end{lemma}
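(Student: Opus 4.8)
The plan is to trace a completion message back to a correct sender and then use the two validity-style properties of \tmsamp{Vote} to show that all correct processes lock onto a single value. First I would observe that a correct process completes only upon receiving $t+1$ ``\tmsamp{complete with} $\sigma$'' A-Casts; since at most $t$ processes are faulty, at least one of these was sent by a correct process. A correct process issues such an A-Cast only in the case $m_r=2$ of some round, i.e.\ when its invocation of \tmsamp{Vote} returned $(\sigma,2)$ with $y_r=\sigma$. Let $r$ be the earliest round in which any correct process A-Casts a completion message. I will show that all correct completion A-Casts carry one and the same value; combined with the fact that some correct process A-Cast ``\tmsamp{complete with} $\sigma$'' (the source of $p$'s $t+1$ messages), this common value must be $\sigma$.

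Next I would propagate agreement forward by one round. Because some correct process obtained $(\sigma,2)$ from \tmsamp{Vote} in round $r$, Lemma~\ref{vote-level2} guarantees that every correct process obtains either $(\sigma,2)$ or $(\sigma,1)$ in round $r$; in both the $m_r=2$ and $m_r=1$ cases the process sets $v_{r+1}=y_r=\sigma$. Hence all correct processes enter round $r+1$ with the common \tmsamp{Vote} input $\sigma$. Here I must invoke the ``one more round'' rule: a process entering the completion case in round $r$ still participates in the round-$(r+1)$ \tmsamp{Vote} and ICC, so it is not lost from the argument.

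Then I would establish the lock-in and conclude. In round $r+1$ every correct process still participating feeds $\sigma$ into \tmsamp{Vote}, so by the validity property (Lemma~\ref{vote-validaty}) it returns $(\sigma,2)$, enters the completion case, and A-Casts ``\tmsamp{complete with} $\sigma$''. The same validity argument applies inductively to every later round any correct process runs, so no correct process ever A-Casts a completion message for a value other than $\sigma$, and every correct process eventually A-Casts ``\tmsamp{complete with} $\sigma$''. Two consequences finish the proof: (i) since at most $t$ processes are faulty, no value $\sigma'\neq\sigma$ can accumulate $t+1$ completion A-Casts, so no correct process can complete with any value other than $\sigma$; and (ii) the at least $n-t\geq t+1$ correct ``\tmsamp{complete with} $\sigma$'' A-Casts ensure that every correct process eventually collects $t+1$ of them and completes with output $\sigma$.

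I expect the main obstacle to be the lock-in/uniqueness step: proving that once agreement is reached through a level-$2$ \tmsamp{Vote} output it is preserved in all subsequent rounds, so that no correct process broadcasts a conflicting completion value, while simultaneously respecting the ``participate in only one more round'' bookkeeping so that no correct process halts before it has either completed or broadcast the agreed value $\sigma$. The auxiliary facts about \tmsamp{Vote} (Lemmas~\ref{vote-validaty} and~\ref{vote-level2}) do the heavy lifting, and the remaining work is careful quantifier management over rounds and over correct processes.
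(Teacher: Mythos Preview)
Your proposal is correct and follows essentially the same route as the paper: trace the completion back to a correct A-Cast, take the earliest round $r$ in which any correct process A-Casts a completion message, apply Lemma~\ref{vote-level2} to force every correct process to set $v_{r+1}=\sigma$, then apply Lemma~\ref{vote-validaty} in round $r+1$ so that every correct process A-Casts ``\tmsamp{complete with} $\sigma$'', and finish by counting A-Casts. The paper's proof is slightly tighter in that it stops after round $r+1$ (no induction over later rounds is needed, since every correct process has already A-Cast by then), and it is a bit less explicit than you are about the ``one more round'' bookkeeping; your extra care there is harmless and arguably clearer.
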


\begin{proof}
  Let us first show that if a correct process A-Casts ``{\tmsamp{complete
  with }}$\sigma$'' for some value $\sigma$, then all correct processes will
  A-Cast ``{\tmsamp{complete with }}$\sigma$''. Let $k$ be the
  first round when a correct process $i$ A-Casts ``{\tmsamp{complete with
  }}$\sigma$''. By Lemma \ref{vote-level2}, every correct process $i$ has
  $y_k = \sigma$ and either $m_k = 2$ or $m_k = 1$. Therefore, no correct
  process A-Casts ``{\tmsamp{complete with }}$\sigma'$'' at iteration $k$.
  Furthermore, all correct processes invoke the voting protocol in round $k +
  1$ with input $\sigma$. Lemma \ref{vote-validaty} now implies that, by the
  end of Step a of round $k + 1$, every correct process has $(y_{k + 1},
  m_{k + 1}) = (\sigma, 2)$. Thus, all correct processes A-Cast
  ``{\tmsamp{complete with }}$\sigma$'', either at round $k$ or at round $k +
  1$.
  
  Now assume a correct process completes with output $\sigma$. Thus, at
  least one correct process A-casted ``{\tmsamp{complete with }}$\sigma$''.
  Consequently, all correct processes A-Cast ``{\tmsamp{complete with
  }}$\sigma$''. Hence, every correct process will receive at least $n - t$
  ``{\tmsamp{complete with }}$\sigma$'' A-Casts and at most \ $t$
  ``{\tmsamp{complete with }}$\sigma'$'' A-Casts. Therefore, every correct
  process will output $\sigma$.
\end{proof}

\begin{lemma}
  If all correct processes have initiated and completed some round $k$, then
  with probability at least $1 / 4$, all correct processes have the same
  value for $v_{k + 1}$ or a set of faulty pairs will eventually be inferred
  by correct processes.
\end{lemma}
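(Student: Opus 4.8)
The plan is to split on whether the event $\mathbbm{E}$ occurs in any IVSS[$k$] instance invoked inside ICC[$k$]. If $\mathbbm{E}$ does occur, then by Lemma~\ref{icc-correctness} a set of faulty pairs is eventually inferred and the disjunction in the statement already holds. So I may assume $\mathbbm{E}$ does not occur, in which case ICC[$k$] is a genuine common coin: for each $v \in \{0,1\}$, with probability at least $1/4$, every correct process outputs $c_k = v$ (Lemma~\ref{icc-correctness}). It then suffices to show that, conditioned on $\mathbbm{E}$ not occurring, all correct processes agree on $v_{k+1}$ with probability at least $1/4$; combining the two cases gives the unconditional bound $1/4$.

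Under this assumption I distinguish three cases according to the strongest grade output by the correct processes in the call to Vote in Step~a of round $k$. If some correct process outputs $(\sigma,2)$, then by Lemma~\ref{vote-level2} every correct process outputs $(\sigma,2)$ or $(\sigma,1)$ and hence sets $v_{k+1}=y_k=\sigma$ (cases I and II of Step~c); agreement is then deterministic. If no correct process has grade $2$ but some outputs $(\sigma,1)$, then by Lemma~\ref{vote-level1} every correct process outputs $(\sigma,1)$ or $(\bot,0)$, with $\sigma$ necessarily unique; the former processes set $v_{k+1}=\sigma$ and the latter set $v_{k+1}=c_k$, so the correct processes agree exactly when $c_k=\sigma$. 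Finally, if every correct process outputs $(\bot,0)$, then all set $v_{k+1}=c_k$, and agreement holds whenever the coin delivers a common value, which by the coin property happens with probability at least $1/4+1/4=1/2$.

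The only case in which agreement is not automatic or of probability at least $1/2$ is the middle one, and its crux is to show $\Pr[c_k=\sigma]\geqslant 1/4$. The key point is that $\sigma$ is fixed by the end of Step~a, hence strictly before any correct process begins the reconstruction phase of ICC[$k$] in Step~b. By the secrecy of IVSS[$k$], encapsulated in part~2 of Lemma~\ref{icc-associate}, each attached value, and therefore the coin value $c_k$, is distributed independently of the adversary's view at the moment reconstruction starts; since $\sigma$ belongs to that view, $c_k$ is independent of $\sigma$. Using the coin bound $\Pr[c_k=s]\geqslant 1/4$ for each fixed $s\in\{0,1\}$, I then compute $\Pr[c_k=\sigma]=\sum_{s\in\{0,1\}}\Pr[\sigma=s]\,\Pr[c_k=s]\geqslant \tfrac14\sum_{s}\Pr[\sigma=s]=\tfrac14$. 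Assembling the three cases yields agreement on $v_{k+1}$ with probability at least $1/4$ whenever $\mathbbm{E}$ does not occur, which is what remained to be shown.

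I expect the decoupling of the coin from $\sigma$ in the middle case to be the main obstacle: one must argue carefully that $\sigma$ is genuinely determined before the coin is revealed and then invoke secrecy to prevent the adversary from correlating the coin against $\sigma$, so that the worst-case $1/4$ guarantee of the common coin applies to the particular target value $\sigma$ rather than being eroded by adversarial scheduling.
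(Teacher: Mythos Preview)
Your proposal is correct and follows essentially the same approach as the paper: case-split on the Vote grades and invoke the ICC correctness property, with the paper collapsing your grade-$2$ and grade-$1$ subcases into a single ``some correct process executes Step~I or~II'' case. The paper is terser and does not spell out the coin--$\sigma$ independence argument you give for the middle case; it simply cites the ICC correctness guarantee (which is quantified over all $t$-adversaries and hence already covers an adversary that knows $\sigma$), so your explicit secrecy-based justification is additional rigor rather than a different route.
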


\begin{proof}
  We have two cases here. If all correct processes execute Step III in
  round $k$, then all correct processes set their $v_{k + 1}$ to the
  output of ICC[$r$]. According to the correctness property of ICC[$r$], the lemma is
  true.
  
  Otherwise, some correct process has set $v_{k + 1} = \sigma$ for some
  $\sigma \in \{ 0, 1 \}$, either in Step I or Step II of round $k$. By Lemma
  \ref{vote-level1}, no correct process will set its $v_{k + 1}$ to
  $\sigma'$. According to the correctness property of ICC[$r$], with probability at
  least $1 / 4$, all correct processes have output $\sigma$ or a
  set of faulty pairs will eventually be inferred by correct processes.
\end{proof}

\begin{lemma}
  Let $n = 3 t + \delta$, then all correct processes complete the
  ABA protocol in expected running time $O ( \frac{t}{\delta} )$.
\end{lemma}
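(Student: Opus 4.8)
The plan is to bound the expected number of rounds until all correct processes terminate, and then multiply by the constant per-round cost. The key is to split the execution into two kinds of rounds: those in which the event $\mathbbm{E}$ occurs (call them \emph{bad} rounds, where the adversary breaks secret-sharing correctness and a set of faulty pairs is inferred) and those in which it does not (\emph{good} rounds, where ICC[$r$] behaves as a genuine common coin). First I would invoke Lemma \ref{lemma-error-rounds}: since $n = 3t+\delta$, the event $\mathbbm{E}$ can occur in at most $\frac{3t}{\delta}+1 = O(t/\delta)$ rounds over the whole execution, regardless of the adversary's strategy. So the number of bad rounds is capped at $O(t/\delta)$.

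Next, for the good rounds I would use the convergence lemma immediately preceding this statement: if all correct processes complete some round $k$, then with probability at least $1/4$ either all correct processes agree on $v_{k+1}$ or a set of faulty pairs is eventually inferred. In a good round no faulty pairs are inferred, so the second alternative is excluded; hence in each good round the correct processes reach a common value with probability at least $1/4$. Once all correct processes share a common value $\sigma$, Lemma \ref{vote-validaty} guarantees that the next \textsf{Vote} yields $(\sigma,2)$ for everyone, so every correct process A-Casts ``\textsf{complete with }$\sigma$''; Lemma \ref{aba-validaty} together with the $t+1$-A-Cast threshold then forces termination within $O(1)$ further rounds.

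Putting these together, the expected number of good rounds before convergence is at most $4$ (a geometric waiting time with success probability $\geq 1/4$), while the bad rounds interspersed among them number at most $O(t/\delta)$; convergence is then followed by $O(1)$ rounds to complete. Hence the total expected number of rounds is $O(t/\delta) + O(1) = O(t/\delta)$. Finally, since each round consists of one \textsf{Vote} invocation and one ICC[$r$] invocation, each completing in constant expected time (by the \textsf{Vote} completion lemma and Lemma \ref{icc-termination}), and all A-Casts complete in constant time, every round costs constant expected time; multiplying gives expected running time $O(t/\delta)$.

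The main obstacle I anticipate is the careful accounting for the interleaving of bad and good rounds: the bad rounds are adversarially placed and the $1/4$-success guarantee applies only to good rounds, so I must argue that the hard cap of $O(t/\delta)$ bad rounds from Lemma \ref{lemma-error-rounds} bounds their total contribution while the good rounds contribute only $O(1)$ in expectation. A subtle supporting point is that the convergence trials in distinct good rounds are (near-)independent, which relies on the freshness and secrecy of the coin ICC[$r$] whenever $\mathbbm{E}$ does not occur, as provided by the ICC correctness property.
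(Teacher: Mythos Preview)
Your proposal is correct and follows essentially the same line as the paper's proof: both bound the number of bad rounds by Lemma~\ref{lemma-error-rounds}, use the preceding convergence lemma to get a constant success probability in each good round, and argue that once all correct processes agree they terminate in $O(1)$ further rounds at constant cost per round. The paper packages the same accounting as an explicit tail bound $\Pr(\tau>k)\leq(3/4)^{k-3t/\delta-1}$ and computes $E(\tau)\leq 3t/\delta+17$, whereas you sum bad rounds plus an expected-$O(1)$ geometric wait over good rounds; these are the same argument in slightly different dress.
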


\begin{proof}
  We first show that all correct processes complete protocol ABA within
  constant time after the first correct process initiates a
  ``{\tmsamp{complete with }}$\sigma$'' A-Cast in Step III of the protocol.
  Assume the first correct process initiates a
  ``{\tmsamp{complete with }}$\sigma$'' A-Cast in round $k$. Then all
  correct processes participate in the voting and common coin protocols of all
  the rounds up to round $k + 1$. We have seen in the proof of Lemma
  \ref{aba-validaty} that all correct processes will A-Cast
  ``{\tmsamp{complete with }}$\sigma$'' in round $k + 1$. All these A-Casts
  complete in constant time. Then every correct process completes the ABA protocol
  after completing $t + 1$ of these A-Casts. Consequently, once the first
  correct process A-Casts ``{\tmsamp{complete with }}$\sigma$'', the ABA
  protocol completes in constant time.
  
  Let the random variable $\tau$ count the number of rounds until the
  first correct process A-Casts ``{\tmsamp{complete with }}$\sigma$''. We
  have
  \[ \tmop{Prob} (\tau > k) = \tmop{Prob} (\tau \neq 1) \cdot \tmop{Prob}
     (\tau \neq 2 | \nobracket \tau \neq 1) \ldots \cdot \tmop{Prob} (\tau
     \neq k | \nobracket \tau \neq 1 \cap \ldots \cap \tau \neq k - 1) . \]
  If event $\mathbbm{E}$ does not occur in round $k$, we have
  $\tmop{Prob} (\tau \neq k | \nobracket \tau \neq 1 \cap \ldots \cap \tau
  \neq k - 1) \leqslant \frac{3}{4}$. Hence, by Lemma \ref{lemma-error-rounds},
  $\tmop{Prob} (\tau > k) \leqslant \left( \frac{3}{4} \right)^{k - 3 t /
  \delta - 1}$. By a simple calculation, we have $E (\tau) \leqslant \frac{3
  t}{\delta} + 17$. Therefore, the expected running time is $O (
  \frac{t}{\delta} )$.
\end{proof}

We have thus shown the following:

\begin{theorem}
  If $n = 3 t + \delta$, then there is an almost-surely terminating ABA
  protocol with expected running time $O ( \frac{t}{\delta} )$.
\end{theorem}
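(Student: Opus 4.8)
The plan is to assemble the three requirements of the ABA definition---\textbf{Termination} and the two parts of \textbf{Correctness}---from the lemmas already established for the protocol in this appendix, and then to package the expected-running-time bound, which is the only substantive step.

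First, for the \emph{validity} half of Correctness (if all correct processes share input $v$, they all output $v$), I would invoke the lemma above showing that when all correct processes have the same input $\sigma$ they all complete with output $\sigma$. For the \emph{agreement} half (all terminating correct processes agree on the output), I would invoke Lemma~\ref{aba-validaty}, which shows that if one correct process completes with output $\sigma$ then every correct process completes with output $\sigma$; this immediately gives a common output.

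For \textbf{Termination} and the running-time bound, I would use a stopping-time argument. Let $\tau$ be the first round in which some correct process A-Casts ``complete with $\sigma$''. The preceding running-time lemma shows that, once such an A-Cast occurs, all correct processes A-Cast ``complete with $\sigma$'' within one more round and then complete in constant time after collecting $t+1$ such A-Casts; so it suffices to bound $E(\tau)$. In any round $k$ in which event $\mathbbm{E}$ does not occur, the ICC[$r$] correctness property forces a common coin agreeing with any value already fixed by the voting protocol with probability at least $1/4$, giving $\tmop{Prob}(\tau \neq k \mid \tau \neq 1 \cap \dots \cap \tau \neq k-1) \leqslant 3/4$ for every such \emph{good} round. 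By Lemma~\ref{lemma-error-rounds}, at most $3t/\delta + 1$ rounds are exceptional (those in which $\mathbbm{E}$ occurs), so all remaining rounds are good. Multiplying the per-round factors in the product expression for $\tmop{Prob}(\tau > k)$ yields the geometric tail $\tmop{Prob}(\tau > k) \leqslant (3/4)^{k - 3t/\delta - 1}$, whence $E(\tau) = O(t/\delta)$. Since the expected number of rounds to termination is finite, every correct process terminates with probability one, which establishes almost-sure Termination as well.

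The main obstacle is the running-time estimate, and specifically the separation of the at most $3t/\delta+1$ exceptional rounds (where the adversary can force disagreement through $\mathbbm{E}$) from the good rounds (where the $1/4$ success probability applies). This is exactly where Lemma~\ref{lemma-error-rounds} is essential: it caps the exceptional rounds using the fault-detection guarantee of IVSS, since each occurrence of $\mathbbm{E}$ permanently eliminates at least $t(n-3t) = t\delta$ faulty pairs while there are at most $tn$ such pairs in total. Once that cap is in place, the remaining geometric-tail computation is routine, and the full details are assembled in the running-time lemma proved above.
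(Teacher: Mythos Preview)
Your proposal is correct and mirrors the paper's own argument: the paper likewise assembles validity and agreement from the two voting-based lemmas, and derives the running-time bound by defining the stopping time $\tau$, using Lemma~\ref{lemma-error-rounds} to cap the number of exceptional rounds at $3t/\delta+1$, and bounding the remaining conditional probabilities by $3/4$ to get the geometric tail $\tmop{Prob}(\tau>k)\leqslant(3/4)^{k-3t/\delta-1}$ and hence $E(\tau)=O(t/\delta)$. The only addition you make is the explicit remark that finite $E(\tau)$ implies almost-sure termination, which the paper leaves implicit.
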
 

\end{document}